\documentclass[12pt,fleqn]{article}

\usepackage{graphicx}
\usepackage{setspace}
\usepackage{amsmath}
\usepackage{amsthm}
\usepackage{amsfonts}
\usepackage{multirow}
\usepackage{booktabs}
\usepackage{amssymb}
\usepackage{authblk}
\usepackage{color}
\usepackage{url}
\usepackage[comma]{natbib}
\bibliographystyle{abbrvnat}

\usepackage{authblk}
\usepackage{float}

\setlength{\textheight}{8.75in}
\setlength{\textwidth}{6.8in}
\setlength{\topmargin}{0.0625in}
\setlength{\headheight}{0.0in}
\setlength{\headsep}{0.0in}
\setlength{\oddsidemargin}{-.19in}
\setlength{\parindent}{0pt}
\setlength{\parskip}{0.12in}

\newcommand{\dv}{\mbox{\boldmath$d$}}

\newcommand{\gav}{\mbox{\boldmath$\gamma$}}

\newcommand{\zv}{\mbox{\boldmath$z$}}
\newcommand{\BF}{{\rm BF}}
\newcommand{\BFDR}{{\rm BFDR}}
\newcommand{\FDR}{{\rm FDR}}
\newcommand{\lv}{{\bf 1}}
\newcommand{\lamv}{\mbox{\boldmath$\lambda$}}
\newtheorem{theorem}{THEOREM}
\newtheorem{definition}{DEFINITION}
\newtheorem{prop}{PROPOSITION}

\newtheorem{algorithm}{ALGORITHM}

\begin{document}
\title{\large \bf A Unified View of False Discovery Rate Control: Reconciliation of Bayesian and Frequentist Approaches }
\author {Xiaoquan Wen\thanks{xwen@umich.edu}}
\affil{University of Michigan}
\date{}
\maketitle

\abstract{
This paper explores the intrinsic connections between the Bayesian false discovery rate (FDR) control procedures and their counterpart of frequentist procedures. 
We attempt to offer a unified view of FDR control within and beyond the setting of testing exchangeable hypotheses. 
Under the standard two-groups model and the oracle condition, we show that the Bayesian and the frequentist methods can achieve asymptotically equivalent FDR control at arbitrary levels.   
Built on this result, we further illustrate that rigorous post-fitting model diagnosis is necessary and effective to ensure robust FDR controls for parametric Bayesian approaches.
Additionally, we show that the Bayesian FDR control approaches are coherent and naturally extended to the setting beyond testing exchangeable hypotheses.
Particularly, we illustrate that $p$-values are no longer the natural statistical instruments for optimal frequentist FDR control in testing non-exchangeable hypotheses.
Finally, we illustrate that simple numerical recipes motivated by our theoretical results can be effective in examining some key model assumptions commonly assumed in both Bayesian and frequentist procedures (e.g., {\em zero assumption}). 
}

\newpage

\section{Introduction}

Ever since the seminal work by Benjamini and Hochberg \citep{Benjamini1995}, false discovery rate (FDR) control has become increasingly popular in the scientific practice of multiple hypothesis testing. 
Comparing to the alternative statistical strategies, such as the control of family-wise error rate (FWER), the FDR control is much more appealing in large-scale data analyses due to its cost-effectiveness considerations. 
In the last decade, both the theory and the computational algorithms for FDR control have matured in the setting of testing multiple exchangeable hypotheses, thanks to the important works of  \cite{Benjamini1995, Efron2001, Storey2002, Storey2003, Storey2004, Genovese2004, Newton2004, Mueller2004, Ghosh2006, Sun2007, Whittemore2007, Storey2007, Efron2008, Muralidharan2010, Stephens2016}, just name a few.
In many fields, e.g., genetics, genomics, and molecular biology, FDR control has emerged as the method of choice to quantify uncertainty and safeguard potential type I errors in scientific discovery processes. 

The existing FDR control approaches can be classified into two distinct categories.  The first category is represented by the frequentist FDR control approaches, e.g., the Benjamini-Hochberg (B-H) procedure and the Storey's $q$-value method, which are characterized by their use of $p$-values to estimate and control FDR.
The methods in the second category require explicit computation of posterior probabilities, also commonly known as local fdr's \citep{Efron2001}, and are considered as Bayesian FDR control approaches. 
The two categories of the methods are fundamentally different because the underlying quantities that measure the expected false discovery proportion are conceptually distinct (the details are explained in Section 2). 
Both categories of approaches have their own strength and weakness. 
For example, the frequentist FDR control methods, which requires no explicit specification of the alternative models, are known to be robust to the misspecification of the alternative models; whereas the Bayesian methods, especially the parametric approaches, have great flexibility to incorporate ancillary data/prior information that can significantly improve the power of hypothesis testing.

There have been many efforts to bridge the gap between the two types of approaches in the literature: Storey \citep{Storey2002} shows that the Bayesian interpretation of the $q$-value method and Efron \citep{Efron2008} demonstrates the connections between Bayesian FDR/local fdr and frequentist FDR under some specific power functions.  
In this article, we aim to explore a more in-depth connection between the two types of approaches. 
In particular, we are interested in investigating if the Bayesian and frequentist FDR control procedures can yield {\em concordant} results for analyzing the same data set.
If the answer is yes, we would like to identify the corresponding necessary and sufficient conditions.
Answers to these questions have important implications for the best practice of  FDR controls for both frequentist and Bayesian approaches.  
Beyond the standard setting characterized by the two-groups model \citep{Efron2008}, we further explore the extensibility an optimality of the two types of FDR control methods in some more complex, but increasingly more realistic, scenarios.

The paper is structured as the follows. We first provide necessary background on frequentist and Bayesian FDR controls, then proceed to present our main theoretical results. Subsequently, we provide numerical illustrations of our theoretical results. Finally, we end the paper by the summarizing the main conclusions and a brief discussion on future directions.

\section{Background}

\subsection{Model and notation}

We follow \cite{Genovese2004, Efron2008} to adopt a mixture model description of testing $m$ exchangeable hypotheses as our starting point.
Let the latent binary indicator $\gamma_i = 1$  denote that the observed data of of the $i$-th hypothesis are generated from the alternative scenario,  and 0 otherwise. 
We use $z_i$ to represent the observed data/summary statistic from the $i$-th test, which can be computed independent of data from all other tests.
Note that $z_i$ can be a $p$-value in a frequentist approach or a Bayes factor (i.e., a marginal likelihood ratio statistic) in a Bayesian approach; it can also be a vector \citep{Stephens2016} instead of a single number.   
We denote the collection of the observed data and the corresponding latent indicators by $\zv = (z_1, \dots, z_m)$ and $\gav = (\gamma_1, \dots, \gamma_m)$, respectively.

The assumed data generation mechanism can be described by the following probabilistic model,
\begin{equation}
 \begin{aligned}
    &  \gamma_i, \dots, \gamma_m \mid \pi_0 \stackrel{i.i.d}{ \sim } {\rm Bernoulli}\, (1 - \pi_0), \\
    &  z_i \mid \gamma_i = 0 \, \sim f_0, \\
   &  z_i \mid \gamma_i = 1 \, \sim f_1.
 \end{aligned}
\end{equation}
In particular, we assume the independence of $\gamma_i$'s conditional on the hyper-parameter $\pi_0$. 
The model also implies that $z_i$'s are exchangeable, and we will relax this particular assumption later in the paper.
This particular mixture model is known as the  \emph{two-groups model} \citep{Efron2008}. 
It is also a special case of conditional independent hierarchical model (CIHM) discussed in \cite{Kass1989}.
Unless otherwise specified,  we assume that  the theoretical null, $f_0$, is known and correctly specified while both $f_1$ and $\pi_0$ are unknown.
Finally, we use $f_c$ to denote the mixture distribution $\pi_0 f_0 + (1- \pi_0) f_1$.

\subsection{FDR estimation and control}

Let $\delta_i(\cdot)$ denote a binary decision function with respect to the $i$-th hypothesis, i.e., $\delta_i = 1$ indicates the rejection of the $i$th hypothesis, and 0 otherwise.  
Under the two-groups model and given a decision function, we define the false discovery proportion (FDP), a random variable, by 
\begin{equation}
  \begin{aligned}
  {\rm FDP} &= \frac{\mbox{number of falsely rejected null hypothese}}{\max(1, \mbox{number of total rejections})} \\
            & = \frac{\sum_{i=1}^m  \delta_i \, (1-\gamma_i)}{ \left( \sum_{i=1}^m  \delta_i \right) \,   \vee \, 1}.
  \end{aligned}
\end{equation}
Both Bayesian and frerquentist FDR control procedures aim to bound some type of expectation of FDP.

\subsubsection{Bayesian FDR and the optimal control procedure}

The Bayesian FDR is defined as the conditional expectation of FDP given the observed data. Because of the explicit conditioning,  $\delta_i$'s are considered deterministic.
\cite{Mueller2006} shows that the optimal decision rule to control Bayesian FDR is given by
\begin{equation} \label{bayes.dec.rule}
  \delta_{B,i}(t_b) = \lv_{ \{ u_i \, \le\,  t_b  \}},
\end{equation}
where $u_i$ denotes the posterior probability $\Pr(\gamma_i = 0 \mid \zv) $ and $t_b$ represents a pre-defined threshold on $u_i$'s. 
The optimal Bayesian FDR control procedure rejects the null hypotheses for which the individual posterior probabilities of false discovery, i.e., $u_i$'s,  are small. 
It straightforwardly follows from the definition that
 \begin{equation}\label{bfdr}
\BFDR(t_b)  = {\rm E}( {\rm FDP} \mid \zv ) = \frac{\sum_{i=1}^m u_i \, \delta_{B,i}(t_b)}{ \left[\sum_{i=1}^m \delta_{B,i}(t_b) \right] \, \vee \, 1}.
 \end{equation}
In the literature, the posterior probability of false discovery is also commonly referred to as {\em local fdr}, a terminology coined by Efron.

\subsubsection{Frequentist FDR control}

The frequentist FDR is defined as the unconditional expectation of FDP, i.e.,
\begin{equation}\label{ffdr}
  {\rm FDR} = {\rm E}( {\rm FDP} ) = {\rm E} ( \BFDR),
\end{equation}
where, in comparison to Bayesian FDR, an additional expectation is taken with respect to the hypothetically re-sampled data, $\zv$. 
Widely used Benjamini-Hochberg (B-H) and Storey's $q$-value algorithms are designed to control frequentist FDR. 
The decision rules embedded in those algorithms have the common form, 
\begin{equation}
  \delta_{F, i}(t_f) = \lv_{\{ p_i  \, \le \, t_f \}},
\end{equation}
where $p_i$ and $t_f$ represent the $p$-value of test $i$ and a pre-defined threshold, respectively.  
While the exact evaluation of frequentist FDR (\ref{ffdr}) is difficult for a given $t_f$, \cite{Genovese2004} shows an accurate asymptotic approximation, 
\begin{equation}\label{ffdr.exp}
  {\rm FDR}(t_f) = \frac{m \, \pi_0 \,  t_f}{\left[ \sum_{i=1}^m \delta_{F,i}(t_f) \right] \, \vee \, 1 } + o \left( \frac{1}{m} \right).
\end{equation}
%where $\sum_{i=1}^m \delta_{F,i}(t_f) / m$ is a plug-in estimator of the cumulative distribution function (CDF) of the mixture distribution evaluated at $t_f$.
The approximation, $\frac{m \, \pi_0 \,  t_f}{\left[ \sum_{i=1}^m \delta_{F,i}(t_f) \right] \, \vee \, 1 }$, is also referred to as an estimate of {\em Fdr} by \cite{EfronBook}.
In \cite{Sun2007} the same quantity is refers to as {\em marginal FDR} (mFDR).

The B-H and the $q$-value procedures are two most commonly applied frequentist FDR control methods.
The difference between the two lies in their treatment of unknown parameter $\pi_0$. 
The $q$-value procedure plugs in a non-parametric estimator of $\pi_0$ that is derived based on sample quantile information.
In comparison, the B-H procedure simply sets $\pi_0 = 1$ in (\ref{ffdr.exp}).
As a result, its estimate essentially becomes an upper-bound of desired FDR.

\subsection{Rejection path and control of FDR}

In this paper, we propose a new statistical device to perform comparisons between FDR control procedures. We formally define a {\em rejection path} as follows.
\begin{definition}
Given a decision rule of a FDR control procedure,  $\delta_i = \lv_{\{ s_i \le t \}}$,  
the rejection path for a data set is the sequence of estimated (Bayesian or frequentist) false discovery rates sequentially evaluated at the threshold values determined by the order statistics, $s_{(1)},\dots,s_{(m)}$, i.e., 
\begin{equation*}
    \FDR(t=s_{(1)}), \FDR(t=s_{(2)}), \dots,\FDR(t=s_{(m)}).
\end{equation*} 
\end{definition}

The rejection path contains the complete information required to control FDR at any pre-defined level $\alpha$. 
Specifically, we find 
\begin{equation*}
  l = \arg\max_k \{k: \FDR(s_{(k)}) \le \alpha \},
\end{equation*}
and reject the tests corresponding to $s_{(1)},...,s_{(l)}$.

Different FDR control procedures typically yield distinct rejection paths when analyzing a given data set. If the rejection paths from two procedures are always identical, the following two conditions are necessarily satisfied.
First, the two procedures rank individual tests in the same order.  
In addition,  the two procedures yield the same FDR estimates at all threshold values in the rejection path. 
The ranking of individual tests is typically determined by the choice of test statistics and can be examined by the method like receiver operating characteristic (ROC) curves in simulation settings.
The evaluation of FDR is the one of the main interests of this paper. 

Direct comparison of two FDR control procedures yields quantitative assessment of relative conservativeness of the different FDR estimation approaches. 
Figure \ref{rejpath.demo} illustrates the pair-wise comparison of the rejection paths from the B-H procedure and the $q$-value procedure for the Hedenfalk data distributed in the R package {\it q-value}. 

\begin{figure}
\centering
\vspace{0pt}
\includegraphics[ width=0.5\textwidth]{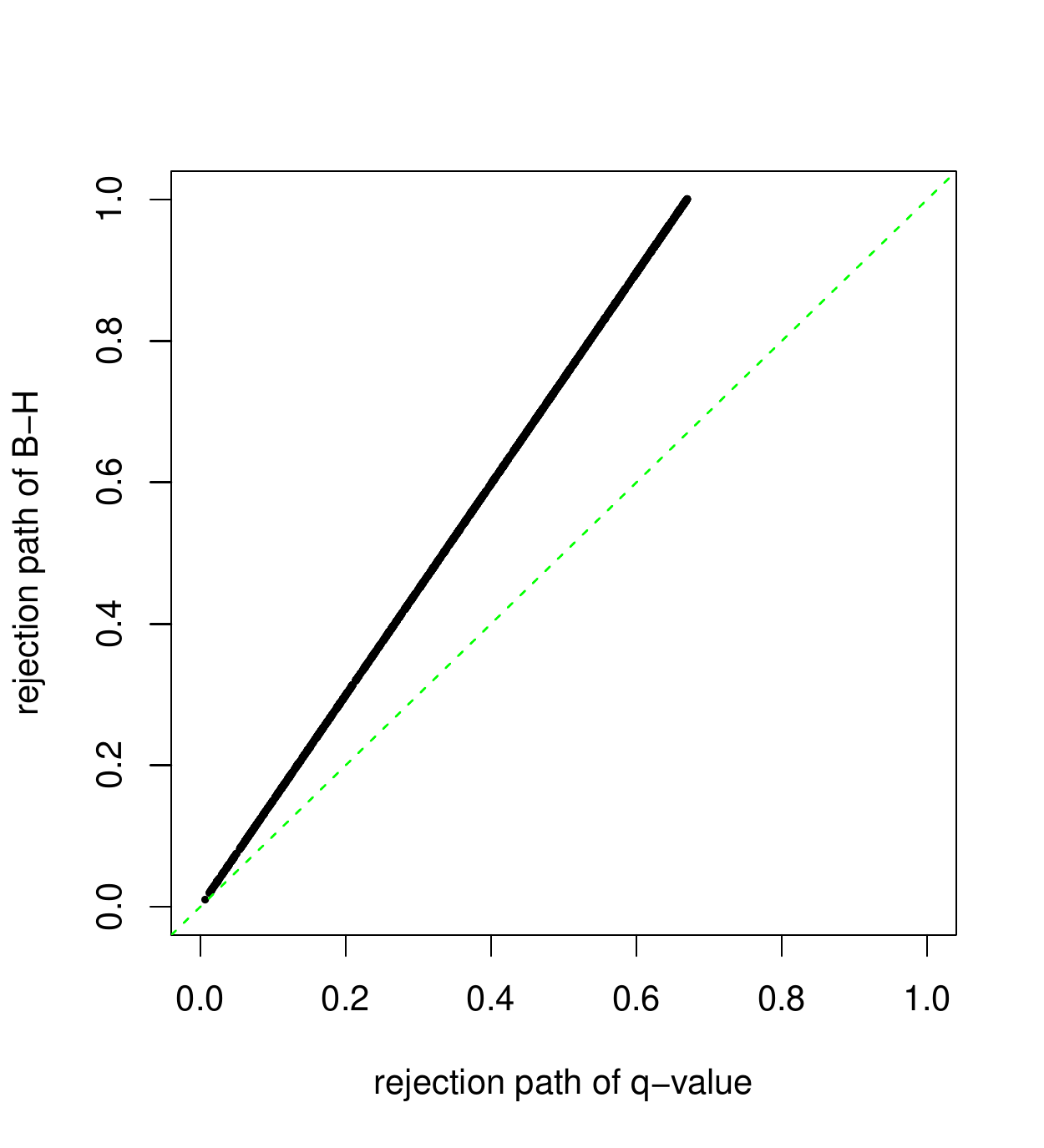}
\caption{{\bf pairwise comparison of rejection paths of the $q$-value method and the B-H procedure.} We  use the Hedenfalk data distributed with the R package {\it $q$-value}, which contain 3,170 $p$-values. The comparison of the rejection paths indicates that the B-H procedure always overestimate the FDR than the $q$-value procedure by a linear factor.  \label{rejpath.demo}}
\end{figure}

\section{Results}

\subsection{Frequentist property of Bayesian FDR control}

Our first new result concerns the quantitative connection between the Bayesian and frequentist FDR control procedures in a setting where $m$ is large.
We consider an ideal scenario where both $\pi_0$ and $f_1$ are correctly specified, which corresponds to an oracle condition  \citep{Sun2007}.
Under such setting, the posterior false discovery probability for each test, $u_i^* = \Pr( \gamma_i = 0 \mid \pi_0, z_i)$,  can be analytically computed by 
\begin{equation}\label{bf2pip}
\begin{aligned}
  u_i^*  & = \frac{\pi_0 f_0(z_i)}{\pi_0 f_0 (z_i) + (1-\pi_0) f_1 (z_i)}\\
         &  = \frac{\pi_0}{\pi_0 + (1-\pi_0) \BF^*(z_i)},
 \end{aligned}
\end{equation}
where $\BF^*(z_i) := f_1(z_i)/f_0(z_i)$ denotes the marginal likelihood ratio statistic/Bayes factor.

We show that under specific conditions, Bayesian and frequentist FDR control procedures are asymptotically equivalent. 
The main result is summarized in Theorem 1.
 \begin{theorem} Assume that 
 \begin{enumerate}
  \item $\pi_0$, $f_0$ and $f_1$ are correctly specified;
  \item  There exists a monotone mapping from the marginal likelihood ratio statistics to the $p$-values.
 \end{enumerate}
It follows that 
\begin{equation*}
 \BFDR \left (t_{b,i} =  u^*_{(i)} \right)   \xrightarrow{a.s.}  \FDR \left(t_{f,i} =  p_{bf, (i)} \right) \mbox{ for all } i.
\end{equation*}
\end{theorem}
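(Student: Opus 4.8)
The plan is to prove the statement by showing that, at the matched thresholds, the two procedures (a) reject exactly the same tests, so that the denominators in (\ref{bfdr}) and (\ref{ffdr.exp}) coincide, and (b) produce numerators that share a common almost-sure limit. That common limit will be the population ratio $\pi_0\Pr_{f_0}(u^*\le t)/\Pr_{f_c}(u^*\le t)$, and the whole argument reduces to a strong law of large numbers together with an exact density cancellation.

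First I would pin down the ordering using Assumption~2. Since $u^*(z)=\pi_0 f_0(z)/f_c(z)$ is strictly decreasing in $\BF^*(z)=f_1(z)/f_0(z)$, and by hypothesis $\BF^*$ is a monotone transform of the $p$-value, the ranking of the tests by $u^*$ coincides with their ranking by $p$-value. Interpreting $p_{bf,(i)}$ as the $p$-value carried by the test with the $i$-th smallest $u^*$, the rejection sets $\{j:u^*_j\le u^*_{(i)}\}$ and $\{j:p_j\le p_{bf,(i)}\}$ are therefore identical, and both denominators equal $i$. It remains only to compare the numerators.

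Next I would evaluate the Bayesian numerator by the strong law. Marginally the $z_i$ are i.i.d.\ from $f_c$, so for a fixed threshold $t$ the normalized numerator converges,
\[
\frac{1}{m}\sum_{j=1}^m u^*_j\,\lv_{\{u^*_j\le t\}}\xrightarrow{a.s.}{\rm E}_{f_c}\!\left[u^*(Z)\,\lv_{\{u^*(Z)\le t\}}\right],
\]
and the mixture density cancels exactly:
\[
{\rm E}_{f_c}\!\left[u^*(Z)\,\lv_{\{u^*(Z)\le t\}}\right]=\int\frac{\pi_0 f_0(z)}{f_c(z)}\,\lv_{\{u^*(z)\le t\}}\,f_c(z)\,dz=\pi_0\!\int_{\{u^*(z)\le t\}}f_0(z)\,dz=\pi_0\Pr\nolimits_{f_0}\!\left(u^*(Z)\le t\right).
\]
Because $\{u^*\le t\}=\{p\le t_f\}$ by the previous step, and the $p$-value is uniform under $f_0$, the null mass of the rejection region is exactly $t_f$, so $\pi_0\Pr_{f_0}(u^*\le t)=\pi_0 t_f$. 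Multiplying by $m$ shows the Bayesian numerator behaves like $m\pi_0 t_f$, which is precisely the Genovese numerator in (\ref{ffdr.exp}).

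Combining the two steps with $t=u^*_{(i)}$ and $t_f=p_{bf,(i)}$, the denominators are identically $i$ while the numerators converge almost surely to the same quantity, so $\BFDR(u^*_{(i)})$ and $\FDR(p_{bf,(i)})$ converge a.s.\ to the common limit, which is the assertion. The main obstacle I anticipate is that the thresholds $u^*_{(i)}$ and $p_{bf,(i)}$ are random and data-dependent, so the pointwise strong law above does not by itself license substituting them; I would upgrade the convergence to be uniform in $t$ (a Glivenko--Cantelli argument on the monotone empirical processes $\frac{1}{m}\sum_j u^*_j\,\lv_{\{u^*_j\le t\}}$ and $\frac{1}{m}\sum_j\lv_{\{u^*_j\le t\}}$, whose limits are continuous), which then allows plugging in the random order statistics. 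The remaining bookkeeping --- the $\vee 1$ truncation, possible ties, the noisy small-$i$ end of the path, and reconciling the $o(1/m)$ term in (\ref{ffdr.exp}) with the a.s.\ limit --- is routine once uniform control is in hand.
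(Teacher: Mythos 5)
Your proposal is correct and follows essentially the same route as the paper's Appendix A: a strong law of large numbers applied to the normalized numerator $\frac{1}{m}\sum_k u_k^*\,\lv_{\{u_k^*\le t\}}$, the identity ${\rm E}\left[u^*\,\lv_{\{u^*\le t\}}\right]=\pi_0\Pr_{f_0}\left(u^*\le t\right)$ (which you obtain by explicit density cancellation against $f_c$ and the paper obtains via the tower property $u^*={\rm E}[\lv_{\{\gamma=0\}}\mid \pi_0, z]$ --- the same computation), and the monotone mapping of Assumption 2 to identify the null mass of the rejection region with the $p$-value before invoking equation (\ref{ffdr.exp}). Your explicit Glivenko--Cantelli step to license substituting the random order statistics $u^*_{(i)}$ is a refinement that the paper leaves implicit (it simply conditions on $u^*_{(i)}$), but it does not change the substance of the argument.
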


\begin{proof}
  See Appendix A.
\end{proof}

\noindent {\bf Remark 1.} The existence of monotone mapping from the Bayes factors to the $p$-values implies that the monotonic relationship between the two statistical instruments holds for any observed data sets. 
The most straightforward way to guarantee this condition is to derive  $p$-values directly from the corresponding likelihood ratio statistics.
In this scenario, Bayesian and frequentist analyses make exactly the same modeling assumptions and extract exactly the same information from data, and the difference in results should solely reflect their procedural difference in FDR controls.  
Nevertheless, it should be noted that the monotonic relationship between Bayes factors and $p$-values can hold in much more general settings. 
For example,  Bayesian hypothesis testing and inference based on $z$-statistics (or $z^2$-statistics) has been widely studied and applied in a wide range of applications (e.g., genetic association studies).  
It is commonly assumed that $z_i  \sim  {\rm N}(0,1)$ under the null and  $z_i \sim {\rm N}(0, 1+k)$ (for some $k > 0$) under the alternative. 
(Note that the alternative model is sometimes interpreted as a random effect assumption.)
Under this setting, both \cite{CoxBook} and \cite{Wakefield2009} have shown that
\begin{equation} \label{abf}
   \BF(z_i) = \frac{1}{\sqrt{1+k}} \exp \left( \frac{1}{2} \, \frac{k}{1+k} \, z_i^2 \right), 
\end{equation}    
which is a  monotonic transformation of $z_i^2$. 
As a consequence, the $p$-value of the Bayes factor (\ref{abf}) is identical to the two-sided $p$-value of the corresponding $z$-statistic.  
More generally, the above relationship holds for modeling of $z$-statistics under alternatives using mixtures of normal and/or uniform distributions under the uni-modal alternative (UA) assumption that is recently proposed by \cite{Stephens2016}.  

\noindent {\bf Remark 2.} 
Both \cite{Mueller2006} and \cite{Sun2007} have shown that the oracle Bayesian procedure is optimal to minimize false non-discovery rate (FNR, a quantity measuring type II errors) while controlling FDR.
Our result essentially extends the optimality results to the frequentist FDR control procedures due to the asymptotic equivalence.

\noindent {\bf Remark 3.} The key assumption on the correctness of model specification is critical to ensure the convergence result. 
Although it is unrealistic to expect this particular assumption to hold \emph{exactly} in practice, the high-dimensional setting of practical applications typically yields a large amount of informative data and offers an opportunity for accurate model estimation. 
Additionally, it underscores the necessity and importance of careful model specification and thorough diagnosis in the practice of Bayesian FDR control. 
This topic will be extensively discussed in the subsequent sections.

Theorem 1 goes beyond the previous results by \cite{EfronBook} and \cite{Storey2003}, both of which focused on the Bayesian representation/interpretation of the frequentist FDR estimation, i.e., the functional form of equation (\ref{ffdr.exp}).  
It highlights the point-wise convergence of rejection paths of the Bayesian and frequentist procedures under the oracle condition. Thus, in a high-dimensional setting and under correct model specifications, the Bayesian and frequentist FDR control procedures are expected to be highly concordant for {\em all} pre-defined FDR control levels.
Figure \ref{prop1.demo} provides a numerical illustration of Theorem 1 using a set of simulated data, where good numerical concordance between the Bayesian and frequentist procedures can be observed when $m \sim 10^3$.

\begin{figure}
\centering
 \begin{tabular}[t]{c c c}
    \includegraphics[width=.32\textwidth]{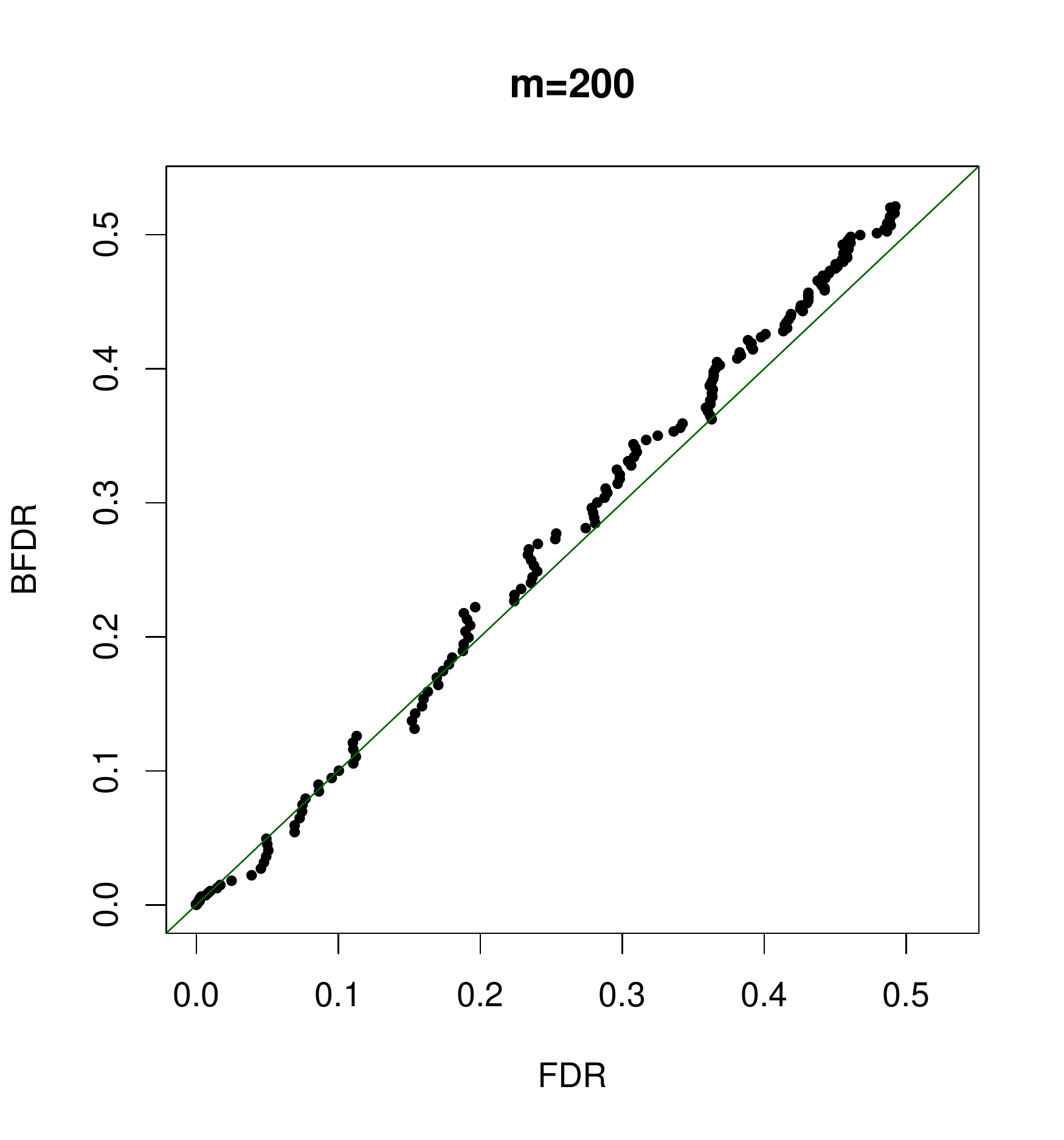}  &    \includegraphics[width=.32\textwidth]{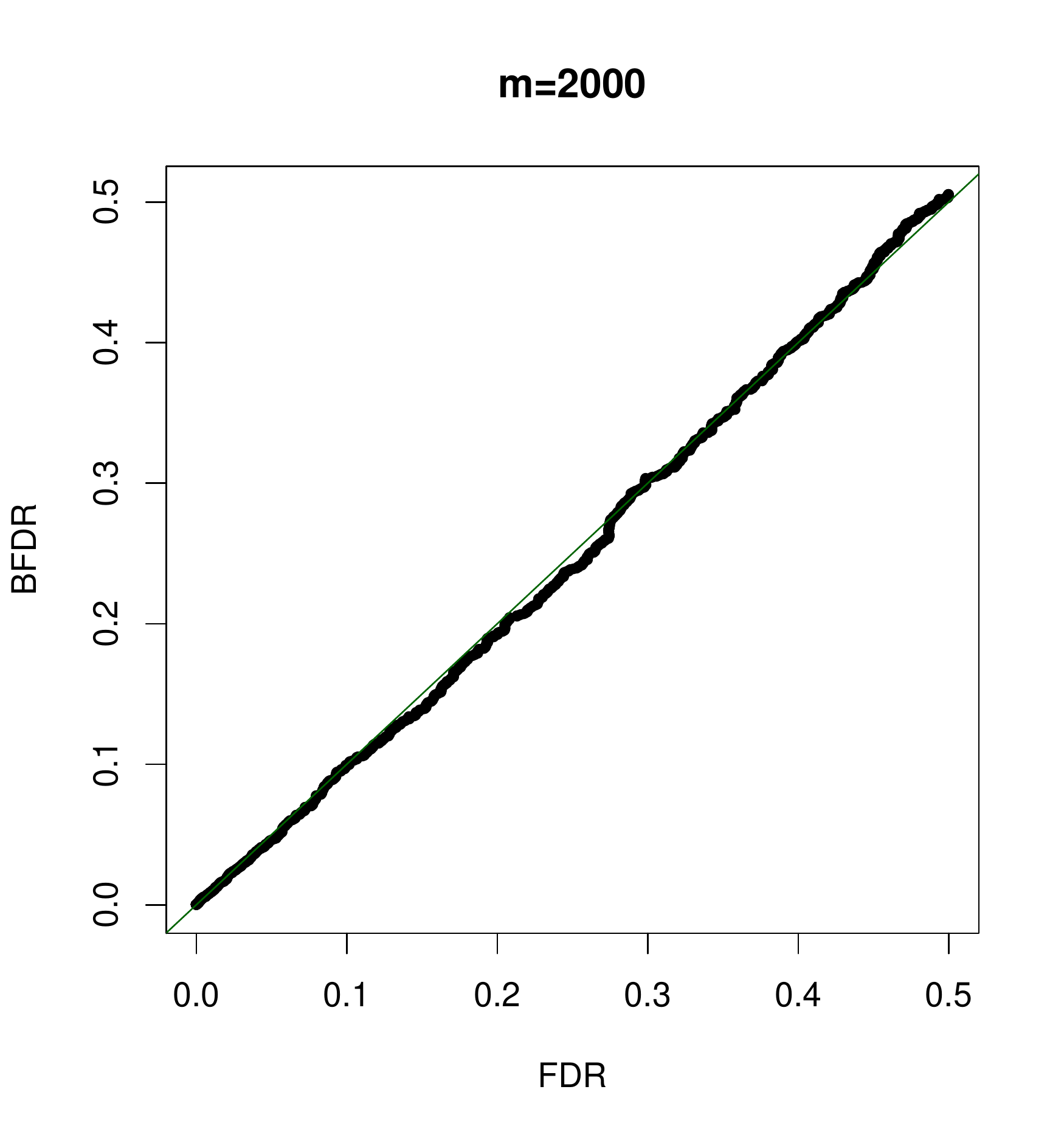} &
   \includegraphics[width=.32\textwidth]{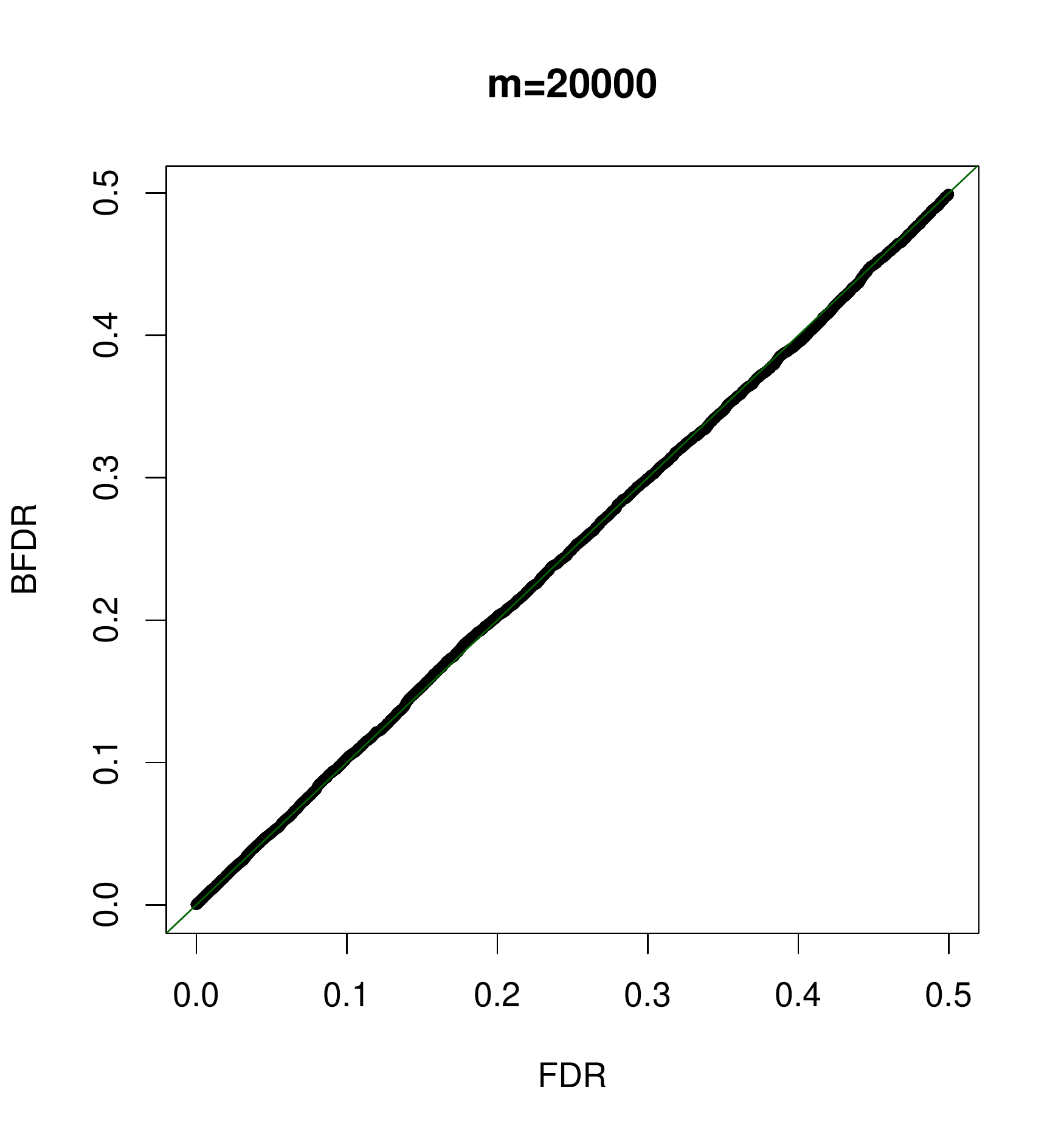}  \\
\end{tabular}   
\caption{{\bf pairwise comparison of rejection paths from the Bayesian and frequentist FDR control procedures under the oracle condtion.}  In all three panels, $z$-scores are drawn from the mixutre distribution, $\pi_0 N(0,1) +(1-\pi_0) N(0, 1+k)$, where we set $\pi_0 = 0.5$ and $k=10$. Each panel represents a simulation with $m$ ($ = 200, 2000$ and 20,000, respectively) independent tests.\label{prop1.demo}}
\end{figure}

\subsection{Approximate inference of  Bayesian FDR}

Standard Bayesian inference requires parametric specification of $f_1$ by $f_{\theta}$, where $\theta$ denotes the necessary hyper-parameters.
Ideally, $f_{\theta}$ should be flexible enough such that it can accurately approximate the true $f_1$.
The examples of such kind are illustrated by \cite{Stephens2016}.   
To enable Bayesian computation, the priors for the set of all the hyper-parameters $\lambda := (\pi_0, \theta)$ also need to be specified.

One of the factors that hinder the usage of the standard Bayesian inference for FDR control in practical settings is its expensive computational cost due to lacking of general analytic solutions for computing local fdr's in a general CIHM. 
Markov chain Monte Carlo (MCMC) algorithms are theoretically possible, however, the lack of scalability and/or effective convergence diagnosis often makes applying Bayesian procedure in high-dimensional settings less appealing. 
Here, we advocate the use of parametric empirical Bayes (PEB) procedure to fit CIHM and argue that PEB  generally provides adequate approximations to exact Bayesian inference in high-dimensional hypothesis testing settings.
  %The use of EB procedure to control FDR is best known as the local FDR (lfdr) control procedure, which is made popular by the works of  \cite{}.  
Comparing to the full Bayesian inference, the key distinction in the PEB procedure is to replace the desired local fdr, $\Pr(\gamma_i = 0 \mid \zv)$, with an empirical Bayes estimate, namely  $\Pr(\gamma_i=0 \mid  \hat \lambda, \, z_i  )$, where $\hat \lamv$ denotes the MLE of $\lambda$ and is a function of all observed data $\zv$.  
This procedure is justified by the result of \cite{Kass1989}, which shows that the PEB approach can be regarded as a \emph{computational approximation} of the full Bayesian computation in the setting of CIHM (for reasonably specified, yet arbitrary,  priors on $\lambda$). 
More specifically,  Result 1 of \cite{Kass1989} implies that, with a pre-specified alternative model $f_\theta$, 
\begin{equation}\label{eb.appx}
  \Pr(\gamma_i = 0 \mid  \zv ) = \Pr( \gamma_i = 0 \mid z_i, \hat \lambda) \cdot \left[ 1 + o ( \frac{1}{m}) \right],
\end{equation}
which is obtained by applying Laplace's method and re-normalizing the resulting approximation into a valid probability distribution. 

The PEB approximation is particularly attractive for at least two reasons. First,  its result has a \emph{relative} error of $o(\frac{1}{m})$ which is appealing for approximating small to modest posterior probabilities. 
In comparison, traditional Monte Carlo based approach only ensures the control of \emph{additive} error at the same scale. 
This point is critically important because the accurate evaluation of small to modest posterior probabilities is the key to ensuring accurate Bayesian FDR control.
Second, the EB procedure, which can be carried out by some well-established optimization algorithms (e.g., EM), is much more computationally efficient than the  MCMC fitting of CIHM and is particularly suitable in a big data setting.
It should be clear that the overall effectiveness of EB-based approximate inference procedure is dictated by the informativeness of the data with respect to the key model parameters. 
In this sense, the high-dimensionality of the simultaneous tests creates a nearly ideal asymptotic setting for accurate parameter estimation. 

The PEB also produces an estimate of the mixture density $f_c$, i.e., 
\begin{equation}
  \hat f_c = \hat \pi_0 f_0 + (1-\hat \pi_0) f_{\hat{\theta}}.
\end{equation}
Interestingly, by the same argument of \cite{Kass1989}, it can be shown that $\hat f_c$ is an approximation of the density of the posterior predictive distribution,  $f (z_{\rm new} \mid \zv) $, i.e., 
\begin{equation}\label{ppd.appx}
  f (z_{\rm new} \mid \zv) = \int f(z_{\rm new} \mid  \lambda) f( \lambda \mid \zv) d\, \lambda = \hat f_c(z_{\rm new}) \left [ 1 + o(\frac{1}{m}) \right]. 
\end{equation}
The posterior predictive distribution is instrumental for Bayesian model diagnosis, which is critical to ensure robust Bayesian FDR control.
The usage of $\hat f_c$ for Bayesian model checking will be illustrated in the next section.

\subsection{Robustness and model diagnosis}

Model misspecification and inadequate fitting can lead to either conservative or anti-conservative behaviors in FDR control.
While the former (e.g.,  overestimating $\pi_0$) is typically accepted, the latter (e.g.,  underestimating $\pi_0$) inflates FDR and is considered dangerous.
It is therefore critical to ensure the robustness of the FDR control by either theoretical arguments and/or practical model diagnosis procedures.
In particular, we expect an effective diagnosis approach could not only flag improper modeling assumptions and/or inadequate fitting but also provide a \emph{directional} indication of anti-conservative FDR control results. 

% Frequentist robustness
The frequentist FDR control procedures mainly rely on the $p$-values derived from the null hypothesis and seemingly avoid the detailed specifications of alternative models. 
Nevertheless, a \emph{precise} frequentist FDR estimation requires an accurate estimate of $\pi_0$, which inevitably needs the knowledge of the alternative model.
A common strategy to circumvent this difficulty is to use an upper bound estimate of $\pi_0$ and acknowledge that the conservativeness of the resulting procedure.  
For example, the B-H procedure simply plugs in the most natural upper bound by setting $\pi_0 = 1$ in Equation (\ref{ffdr}). 
The $q$-value procedure takes advantages of the high-dimensional setting of multiple hypothesis testing and finds a tighter upper bound estimate for $\pi_0$ based on the sample quantile information. 
More specifically, for any $\eta \in (0,1)$ and noting that $p$-values are uniformly distributed under the null model, it follows from the Kolmogorov's strong law of large numbers that
\begin{equation}  
 \label{qlln}
 \frac{1}{m} \sum_{i = 1}^m \lv \{p_i \ge  \eta \}  \xrightarrow{a.s.}   \pi_0 \, (1 - \eta) + (1-\pi_0) \, [1 - F_{p^*}(\eta)],
\end{equation}
where $F_{p^*}(\cdot)$ denotes the CDF of the $p$-values under the (unknown) alternative model.
By ignoring the second term on the right-hand side,  an upper bound estimate of $\pi_0$ is obtained by 
\begin{equation}\label{q.est}
  \tilde \pi_0 = \min \left(1, \frac{\sum_{i=1}^m  \lv \{p_i \ge \eta \}}{m\, (1-\eta)} \right),
\end{equation}
and it satisfies the upper bound condition in the sense,
\begin{equation} \label{upper.bound}
   \tilde \pi_0  \ge  \pi_0 \mbox { almost surely}, \mbox{ as } m \to \infty.
\end{equation}
Theoretically, the upper bound property of $\tilde \pi_0$ holds for arbitrary $\eta$ values, although setting $\eta$ close to 0.50 reduces the variance of the estimate. 
In practice, some use $\tilde \pi_0$ extrapolated at $\eta \to 1$ and argue it provides a tighter bound assuming that most of large $p$-values are generated from the null hypothesis.
However, this assumption can be false if the alternative scenario assumed by the unimodal alternative (UA) assumption is indeed true, where the data generated from the alternative models can be very much likely to produce large $p$-values  (see discussions in \cite{Stephens2016}).   
Nevertheless, in all cases, $\tilde \pi_0$ should provide a reliable upper bound estimate and ensures the robust performance of FDR control. 
Finally, we note that estimator (\ref{q.est}) is not restricted to the usage of $p$-values and can be generalized to almost any test statistics with available quantile information from the null distribution. 
For example, if the $p$-values are derived from the two-sided tests of $z$-statistics, the following estimator based on the $\eta$-quantile of $z^2$, denoted by $\chi^2_\eta$,  provides the identical estimation as (\ref{q.est}),
$$ \tilde \pi_0  =  \min \left(1, \frac{\sum_{i=1}^m  \lv \{z^2_i \le \chi^2_\eta \}}{m\, \eta} \right). $$

The {\em locfdr} procedure \citep{Efron2001, Efron2008}  utilizes a non-parametric empirical Bayes (NPEB) inference framework, where the density of the mixture distribution, $f_c$,  is estimated non-paramterically.
To estimate $\pi_0$,  it makes the ``zero assumption"  (ZA), i.e., $f_1(z) \ll f_0(z) $ as $z \to 0$ in case of simple $z$-statistics.
Similar to the case of estimator (\ref{q.est}), we view the ZA as a robust assumption for FDR control: the violation of the assumption results in over-estimated $\pi_0$.
In general, {\it locfdr} ensures an accurate fit of $f_c$, if the  ZA holds, the resulting estimate is accurate and the subsequent FDR control is optimal or near-optimal; otherwise, it overestimates $\pi_0$ and results in conservative control of FDR. 

In comparison, the standard Bayesian inference with the parametric specification of $f_\theta$ is more susceptible to model misspecification and/or inadequate fitting. 
Here we emphasize the critical importance of conducting post-fitting model diagnosis utilizing the posterior predictive distribution. 
In the case that Bayesian inference is approximated by the PEB procedure, the standard posterior predictive checking in Bayesian inference converges to the common practice of examining the goodness-of-fit of $\hat f_c$,  based on the argument provided by equation (\ref{ppd.appx}).
In particular, we find that it is effective to examine the alignment between the sample quantiles from the observed data and the corresponding theoretical quantiles computed (or estimated) from $\hat f_c$, which is an idea naturally extended from \cite{Cook2006}.  
In some of the examples that we show in Section 4,  the patterns of significant misalignment of the two quantiles also provide an important indication if a particular inadequate fit leads to a conservative or anti-conservative FDR control. 

To provide a theoretical guidance to Bayesian model diagnosis, we extend Theorem 1 to a  practical setting where the inference of  $\pi_0$ and $f_1$ is required. 
Specifically, we consider the estimates $\hat \pi_0$ and $\hat f_c$ (or equivalently, $f_{\hat \theta}$) are obtained through an inference procedure (e.g., PEB), and the local fdr for test $i$ is evaluated by $\hat u_i: = \hat u (z_i; \hat \pi_0) = \frac{\hat \pi_0 f_0 (z_i)}{\hat f_c(z_i)}$.
Furthermore, we denote  the induced order statistics from $\{\hat u_1, ..., \hat u_m\}$ by $( \hat u_{(1)}, \hat u_{(2)}, ..., \hat u_{(m)})$.   
Our main result is summarized in Theorem 2.
 \begin{theorem}
  Assume that
  \begin{enumerate}
   \item there exists a monotone mapping from the marginal likelihood ratio statistics, $f_0/\hat f_1$, to the corresponding $p$-values;
   \item ${\rm E}\, || \hat f_c - f_c ||^2 \to 0 $. 
  \end{enumerate}
It follows that 
\begin{equation*}
 \BFDR \left (t_{b,i} =  \hat u_{(i)} ; \hat \pi_0 \right)  \xrightarrow{p} \FDR \left(t_{f,i} =  p_{ (i)}; \hat \pi_0 \right) \mbox{ for all } i.
 \end{equation*}
 
Furthermore, if   $\hat \pi_0 \xrightarrow{p} \pi_0$,   
 \begin{equation*}
      \BFDR \left (t_{b,i} =  \hat u_{(i)} ; \hat \pi_0 \right)  \xrightarrow{p}  \BFDR \left (t_{b,i} =  u^*_{(i)} \right)
 \end{equation*}
\end{theorem}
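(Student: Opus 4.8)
The plan is to route both Bayesian quantities through the common frequentist FDR evaluated along the shared $p$-value path, so that the already-established first part of Theorem 2 and Theorem 1 can be spliced together. Throughout, write $A_m := \BFDR(t_{b,i}=\hat u_{(i)};\hat\pi_0)$ and $B_m := \BFDR(t_{b,i}=u^*_{(i)})$, and let $p_{(i)}$ denote the $i$-th order statistic of the $p$-values computed from the known null $f_0$. The first observation I would record is that both $\hat u(\cdot;\hat\pi_0)$ and $u^*(\cdot)$ are monotone transforms of the corresponding Bayes factor, so by assumption 1 of Theorem 2 and assumption 2 of Theorem 1 each is a monotone function of the common $p$-value ordering. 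Consequently the $i$-th smallest $\hat u$, the $i$-th smallest $u^*$, and the $i$-th smallest $p$-value all index the same test, and both Bayesian FDRs convert to a frequentist FDR evaluated at the identical threshold $p_{(i)}$.

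First I would invoke the first part of Theorem 2, which gives $A_m - \FDR(p_{(i)};\hat\pi_0)\xrightarrow{p}0$. Next I would exploit the fact that the frequentist FDR estimate in (\ref{ffdr.exp}) is exactly linear in its $\pi_0$-argument, writing
\[
\FDR(p_{(i)};\hat\pi_0)-\FDR(p_{(i)};\pi_0) = (\hat\pi_0 - \pi_0)\,\frac{m\,p_{(i)}}{i},
\]
and controlling the right-hand side via $\hat\pi_0\xrightarrow{p}\pi_0$ together with a uniform bound on the prefactor. That bound follows from the Genovese--Wasserman approximation and the strong law for the empirical $p$-value distribution, which yield $\frac{m\pi_0 p_{(i)}}{i}\le 1$ asymptotically and hence $\frac{m p_{(i)}}{i}\le \pi_0^{-1}$; an application of Slutsky's theorem then delivers $\FDR(p_{(i)};\hat\pi_0)-\FDR(p_{(i)};\pi_0)\xrightarrow{p}0$.

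Finally I would apply Theorem 1 to the oracle quantity: since $\pi_0,f_0,f_1$ are correct and the true Bayes factor maps monotonically to the $p$-values, Theorem 1 gives $B_m\xrightarrow{a.s.}\FDR(p_{bf,(i)})=\FDR(p_{(i)};\pi_0)$, where the last equality holds because the $p$-values derived from the Bayes-factor ordering coincide with the fixed $p$-values from $f_0$. Assembling the three pieces by the triangle inequality,
\[
A_m - B_m = \big[A_m - \FDR(p_{(i)};\hat\pi_0)\big] + \big[\FDR(p_{(i)};\hat\pi_0)-\FDR(p_{(i)};\pi_0)\big] + \big[\FDR(p_{(i)};\pi_0) - B_m\big],
\]
the first bracket vanishes in probability (Theorem 2, part 1), the second in probability (previous paragraph), and the third almost surely, hence also in probability (Theorem 1); therefore $A_m - B_m\xrightarrow{p}0$, which is the claim.

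I expect the main obstacle to be the middle step: justifying that replacing $\hat\pi_0$ by $\pi_0$ inside the frequentist FDR is asymptotically negligible in a way that holds for all $i$. This hinges on the prefactor $\frac{m p_{(i)}}{i}$ remaining $O(1)$ along the meaningful portion of the rejection path, which is precisely where empirical-process control of the order statistics $p_{(i)}$ through the limiting mixture CDF $F_c$ enters; one must also confirm that, in the regime where $\FDR<1$, the capping of the estimate at $1$ does not disrupt the exact linear-in-$\pi_0$ cancellation. The remaining steps are essentially bookkeeping combining the two cited results with Slutsky's theorem and the fact that almost-sure convergence implies convergence in probability.
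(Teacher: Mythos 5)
There is a genuine gap: your argument proves only the ``furthermore'' clause and treats the first displayed conclusion, $\BFDR(t_{b,i}=\hat u_{(i)};\hat\pi_0)\xrightarrow{p}\FDR(t_{f,i}=p_{(i)};\hat\pi_0)$, as already established (``invoke the first part of Theorem 2''). But that first conclusion \emph{is} the main content of the statement you were asked to prove, and it is where both hypotheses of the theorem actually do their work. The paper's proof spends nearly all of its effort there: it introduces the intermediate quantity $\tilde u(z)=\hat\pi_0 f_0(z)/f_c(z)$, uses the $L^2$ consistency ${\rm E}\,\|\hat f_c-f_c\|^2\to 0$ together with Lemmas A.1 and A.2 of \cite{Sun2007} to get $\hat u\xrightarrow{p}\tilde u$ (and likewise for the order statistics), passes this through the Helly--Bray theorem to obtain convergence of the truncated expectations ${\rm E}[\hat u_k\,\lv\{\hat u_k\le \hat u_{(i)}\}\mid\hat\pi_0,\hat u_{(i)}]$, and only then uses the monotone-mapping assumption to identify $F_{\hat u}(\hat u_{(i)})=p_{(i)}$ and hence the limit $\hat\pi_0\,p_{(i)}$. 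None of this machinery appears in your proposal, so as written the proof of the first claim is circular, and the second claim (which you do argue) rests on an unproven foundation.

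Your treatment of the ``furthermore'' part is essentially sound and follows the same route as the paper: once $\hat\pi_0\xrightarrow{p}\pi_0$, the limit $\hat\pi_0\,p_{(i)}/(i/m)$ becomes $\pi_0\,p_{(i)}/(i/m)$, which Theorem 1 identifies as the limit of the oracle $\BFDR(t_{b,i}=u^*_{(i)})$. Your explicit linear-in-$\pi_0$ decomposition with the bound $m\,p_{(i)}/i\le\pi_0^{-1}$ is more careful than the paper's one-line Slutsky step, and is a reasonable way to make that step rigorous uniformly in $i$; note also that, like the paper, you are implicitly importing Theorem 1's hypotheses (correct specification and monotonicity of the \emph{true} Bayes factor in the $p$-value), which are not literally listed among the assumptions of Theorem 2. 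To complete the proof you would need to supply the missing first half along the lines sketched above.
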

\begin{proof}
  See Appendix B.
\end{proof}

\noindent {\bf Remark.} Because the identifiability issue associated with mixture model inference can be extremely complicated, the overall measure of adequate fitting represented by the second assumption of Theorem 2 does not generally guarantee that $\hat \pi_0$ is accurat in theory.  
Thus, our convergence result between $\BFDR$ and $\FDR$ is conditional on the $\hat \pi_0$ estimate, and the accuracy of the $\pi_0$ estimate needs to be further ensured to establish that the FDR control results are approaching the theoretical optimal.  

Theorem 2 implies that if a Bayesian model is adequately fit by the parametric empirical Bayes approach, we expect a strong agreement between the Bayesian and the corresponding frequentist FDR control procedures, provided that the same estimate of $\hat \pi_0$ is applied to both procedures.
Conversely, notable discordance between the Bayesian and frequentist approaches indicates poor modeling assumption and/or inadequate fitting of the assumed parametric model.

Note that Theorem 2 is not directly applicable to the NPEB approach implemented in the {\it locfdr} method, despite that the overall adequate fitting of the mixture distribution, $f_c$, is usually achieved.  
This is because the Bayes factors/likelihood ratio statistics are implicit (i.e., no explicit form of $f_\theta$ is estimated), and the theoretical validation of the first assumption is not plausible. 
Nevertheless, in practice,  the rank correlation between local fdr's from the NPEB approach and the corresponding p-values can be numerically examined. 
As the rank correlation $\to 1$, it is expected, based on Theorem 2,  that the two approaches yield concordant FDR control results. 
This point will be fully illustrated in Section 4.

One of the important practical implications of Theorem 2 is that the goodness-of-fit of the mixture distribution (i.e., the second assumption) is a {\em necessary} condition to ensure the frequentist property of the Bayesian FDR control. 
This result underscores the importance of model checking as a critical part of an overall  Bayesian FDR control procedure.
Beyond the scope of Theorem 2,  exploring the detailed patterns of misfitting by $\hat f_c$ may offer valuable insights in determining whether the inadequate fit leads to conservative or anti-conservative FDR control.

\subsection{Multiple hypothesis testing with non-exchangeable data}

In many practical settings of multiple hypothesis testing, there exist ancillary data that make the observed data non-exchangeable. 
For example, in the analysis of microarray/RNA-seq experiments for identifying differentially expressed genes, the genes are naturally grouped into gene sets by their biological relevance/pathway information. 
It is often suspected, {\it a priori}, that genes in certain sets are more (or less) likely to be differentially expressed under the specific experimental condition.

Without loss of generality, we assume that testing data can be classified into $K$ mutually exclusive groups based on some known categorical annotation $\dv = (d_1,\dots, d_m)$, where $d_i$ is the corresponding group label for the data involved in the $i$-th test.   
Under this setting,  the exchangeability assumption in the original two-group model is no longer valid due to the availability of the covariates, $\dv$.
Nevertheless, it is straightforward to extend the CIHM from the two-group model into the following form to accommodate the non-exchangeable data:
\begin{equation}\label{multi.groups.model}
 \begin{aligned}
    &  \gamma_i \mid d_i  \sim {\rm Bernoulli} (1 - \pi_{d_i, 0}), \\
    &  z_i \mid \gamma_i = 0, \, d_i  \sim f_{d_i, 0}, \\
   &  z_i \mid \gamma_i = 1, \, d_i  \sim f_{d_i, 1}.
 \end{aligned}
\end{equation}

The Bayesian FDR control is coherent in dealing with the non-exchangeable data in multiple hypothesis testing. 
Specifically, the same decision rule (\ref{bayes.dec.rule}) is applied and the only notable difference is that the false discovery probability for the $i$-th test, $\Pr(\gamma_i = 1 \mid \zv, d_i)$  is now additionally conditional on the covariate data, $d_i$.
Under the oracle condition, i.e., as the quantities $\{ (\pi_{k,0}, f_{k,0}, f_{k,1}): k=1,...,K\}$ are known, the desired local fdr is computed by 
\begin{equation*}
  \Pr(\gamma_i = 1 \mid \zv, d_i) = \frac{\pi_{d_i,0} \, f_{d_i, 0}(z_i)}{\pi_{d_i,0} \, f_{d_i, 0}(z_i) + (1-\pi_{d_i,0}) \, f_{d_i,1}(z_i)},
\end{equation*}
which is monotonic to the weighted likelihood ratio (wlr) statistic, 
\begin{equation*}
 wlr^*(z_i;d_i) =\frac {1-\pi_{d_i,0}}{\pi_{d_i,0}} \, \frac{f_{d_i,1}(z_i)}{f_{d_i, 0}(z_i)}.
\end{equation*} 
Importantly, under the oracle condition for multiple testing with non-exhangeable data, \cite{Sun2007} shows that the Bayesian FDR control is optimal in the sense that it minimizes FNR while controlling the desired FDR level.
%In a general mixture model formulation, as an extension to the two-group model,  it translates to that genes in certain pre-defined gene sets may have lower (or higher) $\pi_0$ values than others. 
A natural question to ask is:   is there asymptotically equivalent frequentist procedure corresponding to the Bayesian procedure?

Let $F_{k, wlr^*}$ denote the CDF of the wlr statistic, $\frac {1-\pi_{k,0}}{\pi_{k,0}} \, \frac{f_{k,1}}{f_{k, 0}}$, for group $k$ under the null. 
We find the following procedure seemingly provides a frequentist FDR control equivalent. 
\begin{algorithm} {\rm (Frequentist FDR control of multiple hypothesis testing with non-exchangeable data)}
\begin{enumerate}
 \item Obtain the CDF of the weighted likelihood ratio statistics under the null hypothesis, $F_{k,wlr^*}$, for each group $k = 1, ..., K$.
 \item At a given threshold of the wlr statistics, namely t,  reject the set of tests $\{i: wlr^*_i \ge t\}$ across all groups. 
 \item Evaluate the frequentist FDR at threshold $t$ by 
  \begin{equation} \label{non-ex.ffdr}
                         \FDR(t) = \frac{\sum_{i=1}^m \pi_{d_i,0} [1-F_{d_i, wlr^*}(t)]}{ \left[\sum_{i=1}^m \lv_{\{wlr^*_i \ge t\}}\right] \vee 1}.
\end{equation}   
\end{enumerate}
\end{algorithm}
In practice, to apply Algorithm 1, we rank the wlr statistics from all tests (across groups) in {\em descending} order and evaluate $\FDR(t = wlr^*_{(i)})$ sequentially  until the pre-defined FDR control level is achieved.
We show, in Proposition 1, that Algorithm 1 controls frequentist FDR and is equivalent to its Bayesian counterpart as $m \to \infty$. 

\begin{prop} 
Under the oracle condition, the frequentist procedure described in Algorithm 1 is asymptotically equivalent to the Bayesian FDR control procedure for testing multiple hypotheses with non-exchangeable group structures, i.e., 
\begin{equation*}
    \BFDR\left ( t_{b,i} = u^*_{(i)} \right ) \xrightarrow{a.s.} \FDR \left (t_{f,i} = wlr_{(i)}^* \right) \mbox{ for all } i.
\end{equation*}
\end{prop}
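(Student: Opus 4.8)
The plan is to reduce the claim to a statement about the numerators, establish the numerator equivalence by a within-group law of large numbers combined with a single change-of-measure identity, and then transfer the resulting fixed-threshold convergence to the data-dependent thresholds $wlr^*_{(i)}$ via a uniformity argument. The key preliminary observation is that, under the oracle condition, the local fdr and the weighted likelihood ratio obey the exact, group-independent relation $u_i^* = 1/\big(1 + wlr^*(z_i; d_i)\big)$, which follows by dividing the numerator and denominator of the local-fdr formula by $\pi_{d_i,0} f_{d_i,0}(z_i)$. Hence $u^*$ is a strictly decreasing function of $wlr^*$ that does not depend on the group label, so ranking the tests by ascending $u^*$ is identical to ranking them by descending $wlr^*$. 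In particular, the Bayesian rule $\lv_{\{u_i^* \le u^*_{(i)}\}}$ and the frequentist rule $\lv_{\{wlr^*_i \ge wlr^*_{(i)}\}}$ reject exactly the same set of tests at each position of the path, so the two denominators in $\BFDR$ and $\FDR$ coincide. It therefore suffices to show that $\frac{1}{m}\sum_i u_i^* \lv_{\{wlr^*_i \ge t\}}$ and $\frac{1}{m}\sum_i \pi_{d_i,0}\,[1 - F_{d_i, wlr^*}(t)]$ share the same limit at the relevant threshold.

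Next I would fix a deterministic $t$ and work group by group. Within group $k$ the data $z_i$ are i.i.d. draws from the mixture $f_{k,c} = \pi_{k,0} f_{k,0} + (1-\pi_{k,0}) f_{k,1}$, so Kolmogorov's strong law applies to the bounded summands $u_i^* \lv_{\{wlr^*_i \ge t\}}$. The crucial identity is $E_{f_{k,c}}\!\left[u^* \lv_{\{wlr^* \ge t\}}\right] = \pi_{k,0}\,[1 - F_{k,wlr^*}(t)]$, which is immediate once one notes that $u^*(z)\, f_{k,c}(z) = \pi_{k,0} f_{k,0}(z)$, turning the posterior-weighted average over the mixture into an unweighted null probability. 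Summing the per-group limits with weights $n_k/m$ (assumed to converge to the limiting group fractions) shows that, for each fixed $t$, the Bayesian numerator average converges almost surely to the frequentist numerator average; this is precisely the non-exchangeable analogue of the argument behind Theorem 1.

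The main obstacle, exactly as in the exchangeable case, is that the statement is evaluated not at a fixed $t$ but at the random order statistic $wlr^*_{(i)}$, which depends on all the data and moves with $m$. To handle this I would upgrade the pointwise convergence to uniform convergence over $t$: the Bayesian numerator process is monotone in $t$ and, assuming each $F_{k,wlr^*}$ is continuous (so that $wlr^*$ has no atoms and ties occur with probability zero), its limit is continuous, whence the standard monotone-to-continuous (Glivenko--Cantelli type) upgrade gives $\sup_t \big| \frac{1}{m}\sum_i u_i^*\lv_{\{wlr^*_i\ge t\}} - \frac{1}{m}\sum_i \pi_{d_i,0}[1-F_{d_i,wlr^*}(t)] \big| \xrightarrow{a.s.} 0$; likewise the common denominator $\frac{1}{m}\sum_i \lv_{\{wlr^*_i \ge t\}}$ converges uniformly to its limiting survival function. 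Writing $\BFDR(t) - \FDR(t)$ as the ratio of the numerator difference to this common denominator, and taking $t = wlr^*_{(i)}$ with $i/m \to \tau \in (0,1)$ so the denominator stays bounded away from zero, the uniform bound forces $\BFDR\big(wlr^*_{(i)}\big) - \FDR\big(wlr^*_{(i)}\big) \to 0$ almost surely, which is the claim. The points requiring the most care are the continuity (no-ties) assumption on $F_{k,wlr^*}$, which underwrites both the unambiguous ranking and the uniform-convergence upgrade, and the boundary regime $\tau \to 0$ (few rejections), where the $\vee\, 1$ in the denominators must be tracked explicitly.
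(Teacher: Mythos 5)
Your proposal is correct and rests on the same two pillars as the paper's proof: the change-of-measure identity $\mathrm{E}\bigl[u^*\,\lv_{\{wlr^*\ge t\}}\bigr]=\pi_{d,0}\,[1-F_{d,wlr^*}(t)]$ (equivalently, $u^*(z)f_{d,c}(z)=\pi_{d,0}f_{d,0}(z)$) and a strong law applied to the numerator, with the denominators matching because $u^*=1/(1+wlr^*)$ is a group-independent decreasing bijection. Where you diverge is in execution. The paper treats the $m$ summands as independent but non-identically distributed (indexed by $d_i$), checks the bounded-variance condition, and invokes Kolmogorov's SLLN directly at the random threshold $wlr^*_{(i)}$, implicitly conditioning on the order statistic; you instead run an i.i.d. SLLN within each group at a fixed $t$, recombine with the limiting group fractions, and then upgrade to uniformity in $t$ via a monotonicity/Glivenko--Cantelli argument before substituting $t=wlr^*_{(i)}$. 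Your route is more laborious but buys genuine rigor at exactly the point the paper is informal: the summands $u_k^*\,\lv_{\{u_k^*\le u_{(i)}^*\}}$ are not independent once the data-dependent threshold is inserted, and your uniform-convergence step (together with the explicit no-atoms assumption on $F_{k,wlr^*}$ and the requirement $i/m\to\tau>0$ keeping the denominator bounded away from zero) is the standard way to make that substitution legitimate. The one extra hypothesis you introduce, convergence of the group proportions $n_k/m$, is not stated in the paper but is implicitly needed there too for the averaged limit to exist; it is harmless and worth making explicit.
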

\begin{proof}
  See Appendix C.
\end{proof}

Because of the asymptotic equivalence, we conclude that Algorithm 1 also shares the optimality of the Bayesian procedure under the setting. 
Although the quantity $1-F_{d_i, wlr^*}$ may be interpreted as a group-specific $p$-value, it is not directly involved in the decision rule as in the 
case for non-exchangeable data.
Furthermore, note that the ranking of the wlr statistics is generally different from the ranking of the group-specific $p$-values across groups.  
Many authors \citep{Genovese2006,IHW2016} also have explicitly sought the decision rules of the following form for testing with non-exchangeable data, 
\begin{equation}
   \delta_i(t) = \lv_{\{ \frac{p_i}{w_{d_i}} \le t\}}, 
\end{equation}
where $p_i$ represents the $p$-value derived from $\zv$ for the $i$-th test and $w_{d_i}$ represents a group-specific weight.
It should be noted that weighting $p$-values generally results in a very different rejection path than weighting the likelihood ratio statistics because of the irreconcilable difference in ranking individual tests and is therefore unlikely, if not impossible,  to achieve the optimal performance. 

In a practical setting when $\{ (\pi_{k,0}, f_{k,0}, f_{k,1}): k=1,...,K\}$ are unknown, the same approximate Bayesian inference (i.e., PEB) and model diagnosis strategies apply for the Bayesian FDR control procedure. 
On the other hand, extending the optimal oracle frequentist FDR control seems non-trivial: unlike in the two-group model setting, it is required to make explicit assumption and inference on the distributions from the alternative scenarios for each group $k$.
As a consequence, the operational procedure of an ideal frequentist FDR control method in this setting may inevitably resemble its Bayesian counterpart. 
 
Finally, we note that the parametric Bayesian approach can be consistently extended to account for continuous covariate data $\dv$. 
As a starting point, a CIHM can be specified as 
\begin{equation}
\begin{aligned}
   &  \gamma_i \mid d_i  \sim {\rm Bernoulli}\, \left(\frac{\exp [\alpha_0 + \alpha_1 d_i]}{1+\exp [\alpha_0 + \alpha_1 d_i]}\right), \\
   &  z_i \mid \gamma_i = 0 \, \sim f_0, \\
   &  z_i \mid \gamma_i = 1 \, \sim f_1.
\end{aligned}
\end{equation}
Note that the logistic prior model for $\gamma_i \mid d_i$ is applicable for the categorical covariate data:  it is simply an alternative parametrization of the $\gamma_i$ prior in (\ref{multi.groups.model}).   
Furthermore, it is also possible to specify parametric distributions of $f_0$ and $f_1$ that are dependent on the continuous covariate data. 
However, these modeling choices are largely context-dependent and outside the scope of this paper.

\section{Numerical illustrations}

\subsection{{\em locfdr}  vs. $q$-value}

The {\it locfdr} and the $q$-value procedures are two commonly applied procedures for Bayesian and frequentist FDR controls, respectively.  
The implementation of the {\it locfdr} method represents a frequentist inference alternative to estimate the Bayesian FDR, as it circumvents prior specification of $\pi_0$ and the explicit computation of parametric likelihood for alternatives. 
For the very reason, it is difficult to evaluate the rejection path of the local fdr method via Theorem 2. 
Here we conduct numerical experiments to compare the behaviors of the {\it locfdr} and the $q$-value methods.

To ensure both the {\it locfdr} and the $q$-value methods utilize the equivalent test statistics, we directly simulate $z^2$-statistics and obtain the $p$-values for the $q$-value method according to the $\chi^2$-distribution with 1 degree of freedom.
We then compute $z_i =  {\rm sgn}(z_i) \sqrt{z_i^2}$, for each test $i$, where ${\rm sgn}(z_i)$ is independently drawn from the set $\{+1, -1\}$ with the probability 0.5 for each choice.   
This transformation results in an overall symmetric $z$-score distribution, and the corresponding two-sided $p$-values derived from the $z$-statistics are identical to what is used in the $q$-value method.
We employ the two-groups model to simulate $z^2$ statistics with $\pi_0 = 0.55$ and $m = 20,000$. 
The null data are simulated from the theoretical $\chi^2_1$-distribution.
The $z^2$ statistics of the alternative models are generated from a family of $\Gamma(k, \theta)$ distributions, where $k$ and $\theta$ denote the shape and scale parameters, respectively.
We find that this family of distributions is convenient to describe a wide spectrum of alternative scenarios under both the UA and the ZA.  
In particular, if $z \sim {\rm N}(0, \phi^2)$, it follows that $z^2 \sim \Gamma (0.5, 2 \phi^2)$.  
Adjusting the shape parameter $k$ in the range of $(0, 0.5)$ results in heavy-tailed distributions of $z$-scores that resemble $t$ and double-exponential distributions.
In comparison, setting $k \gg 0.5$ leads to the distribution of the simulated $z$-scores with two modes that are symmetric to 0, and the larger $k$ value results in greater separation of the two modes. 
Under this setting, the ZA seems reasonable for $k > 0.5$, otherwise, the generated data are better described by the UA assumption.
For inference,  we apply the same non-parametric estimator of $\pi_0$ (\ref{q.est})  for both the {\it locfdr} and the $q$-value methods by setting $\eta = 0.5$ and assuming the true theoretical null distribution of $z$-scores, i.e., ${\rm N}(0,1)$.

The simulation results show that when $k \ge 0.5$, the two methods always yield highly concordant rejection paths, and the practical difference is typically negligible. 
In comparison, as $k$ is decreased from 0.5 toward 0,  the rejection paths become noticeably different: although both approaches behave conservatively,  the {\it locfdr} method is significantly more conservative than the $q$-value method. 
In light of Theorem 2, the observation indicates that the ranking of the data becomes increasingly different by the two approaches as $k$ decreases from  0.5 towards 0. 
We examine the empirical rank correlations of $p$-values and the values of the inferred local fdr's from the simulated data with respect to a range of $k$ values. 
Figure \ref{rank_corr.fig} clearly verifies the observed pattern. 
This can be explained by the fact that likelihood ratio statistics $f_1(|z|)/f_0(|z|)$ may not be monotonically decreasing with respect to $|z|$ for heavy-tailed distributions.

The most important finding from this exercise is that comparing the rejection paths (or the rankings of test statistics) between the {\em locfdr} and the $q$-value method may be helpful in diagnosing the zero assumption, as the concordance of the rejections seemingly suggests the validity of the ZA. 
It is worth noting that in most cases the divergence between the rejection paths becomes noticeable only when the pre-defined control level is large. 
For practically applied stringent FDR control levels (e.g., $ \alpha = 0.05$), the two approaches generally agree well in all situations. 
(This is because the ranking of data points at extreme tails are typically consistent in both $p$-values and likelihood ratio statistics.)  
Nevertheless, the complete rejection path, rather than the part under some stringent threshold, is informative for validating the model assumptions.

\begin{figure}
\centering
\begin{tabular}{c c}
    \includegraphics[width=.40\textwidth]{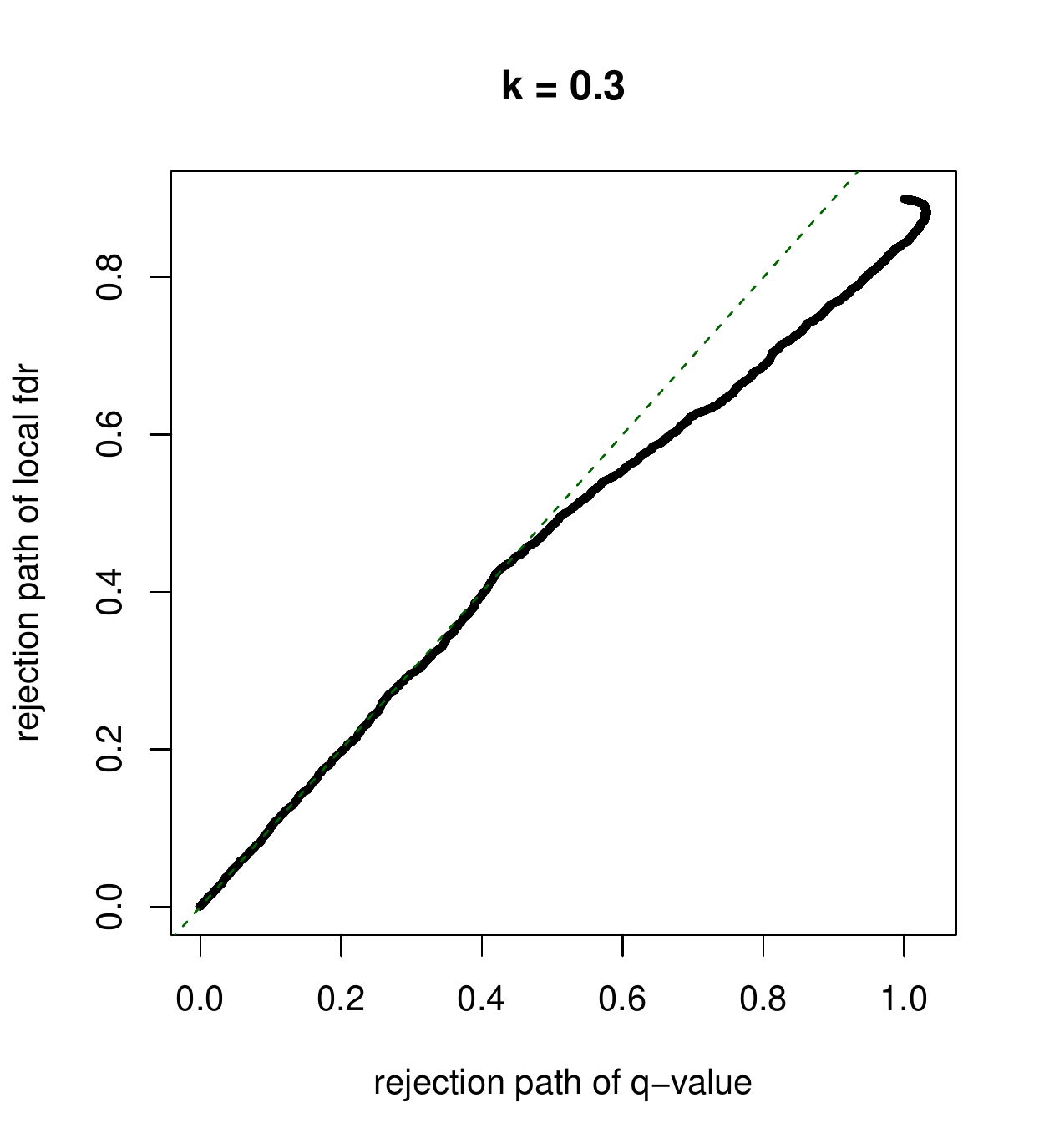}  &        \includegraphics[width=.40\textwidth]{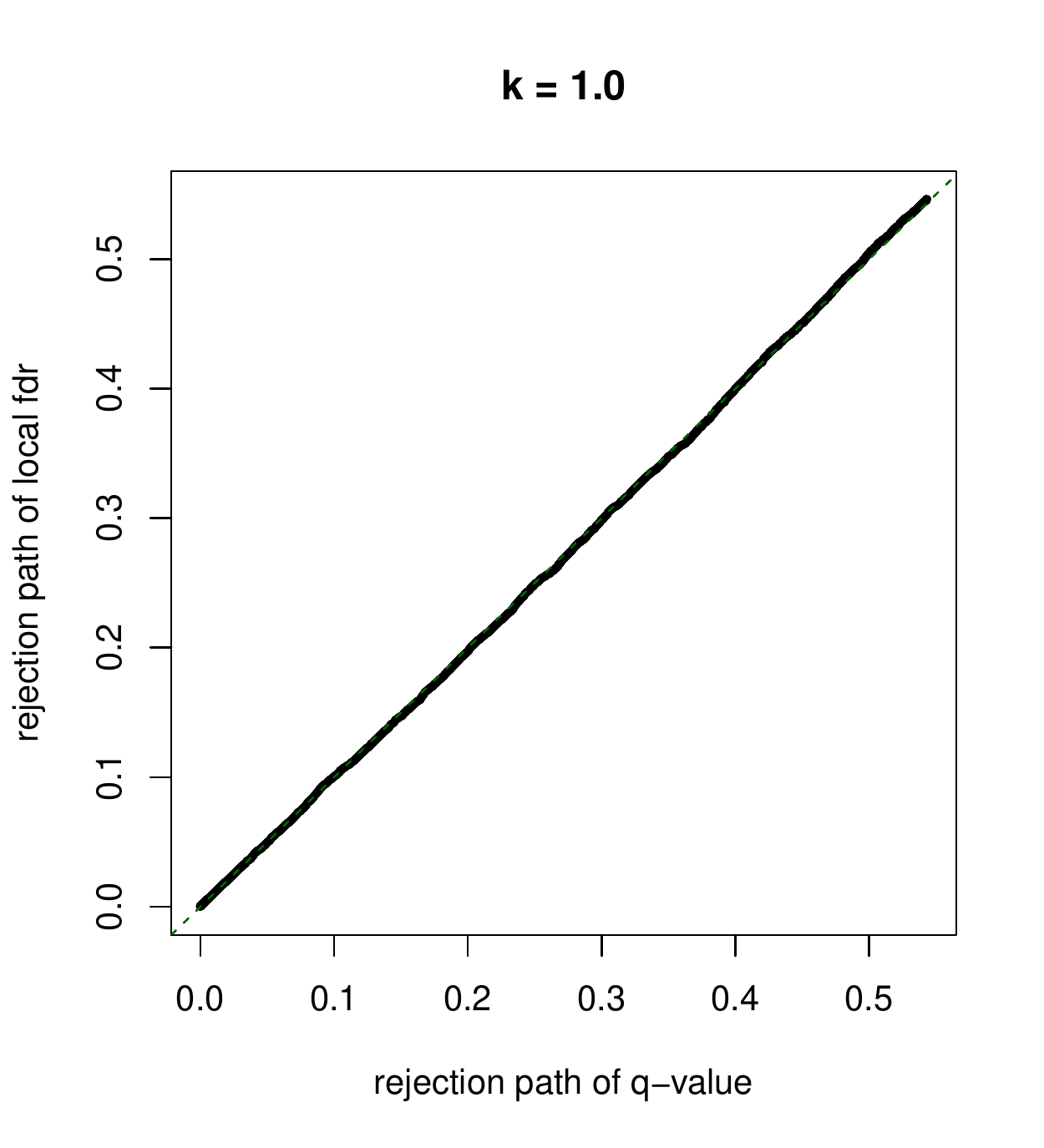} \\
 \multicolumn{2}{c}{\includegraphics[width=.80\textwidth]{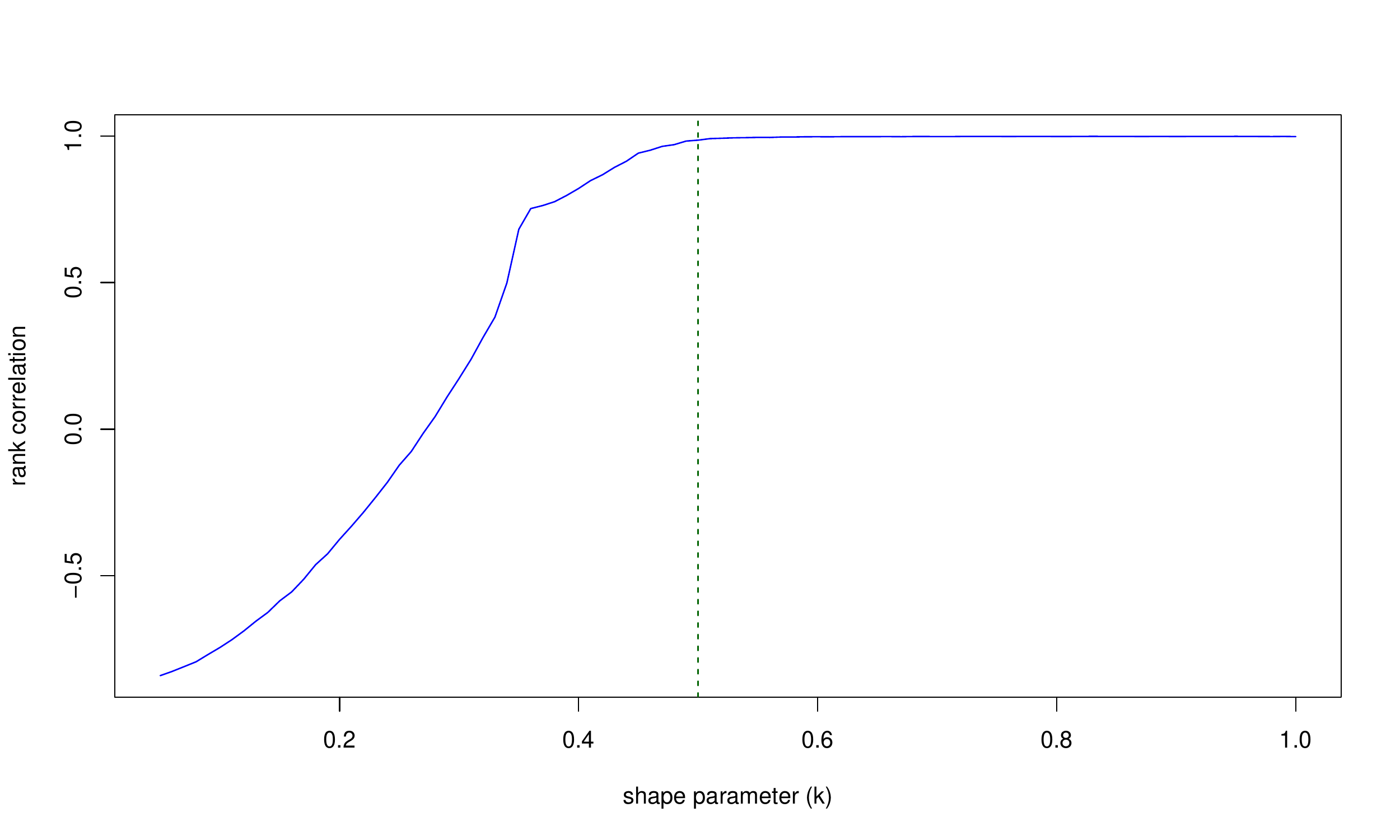} } \\
 \end{tabular}     
\caption{{\bf Numerical comparison between the {\it locfdr} and the $q$-value methods.} The top-left panel shows the disconcordance of the rejection paths between the two approaches as the ZA is severely violated (shape parameter set to 0.3 for the alternative distribution). The top-right panel indicates the concorddance of the two rejection paths when the ZA is reasonable (shape parameter $ = 1.0$ for the alternative distribution). The bottom panel shows the rank correlation between $p$-values (used by the $q$-value method) and the local fdr's computed by the {\it locfdr} method. Clearly, as the ZA is severely violated, the rank correlation significantly deviates from 1. Based on Theorem 2, this explaines the disconcordance of the rejection paths between the {\em locfdr} and the $q$-value methods \label{rank_corr.fig}}
\end{figure}

Finally, we apply the comparison to the Hedenfalk data distributed in the $q$-value package. The data set includes 3,220 pre-computed $p$-values from a differential gene expression study.  We transform the provided $p$-values to the corresponding $z$-scores according to the distribution function of the standard normal and apply both the {\it locfdr} and the $q$-value methods using the same $\hat \pi_0$ estimation method described above. 
The comparison of the rejection paths indicates the results of the two approaches have an excellent agreement (Figure \ref{lfdr_vs_q_hedenfalk.fig}).

\begin{figure}
\centering
 \includegraphics[width=.50\textwidth]{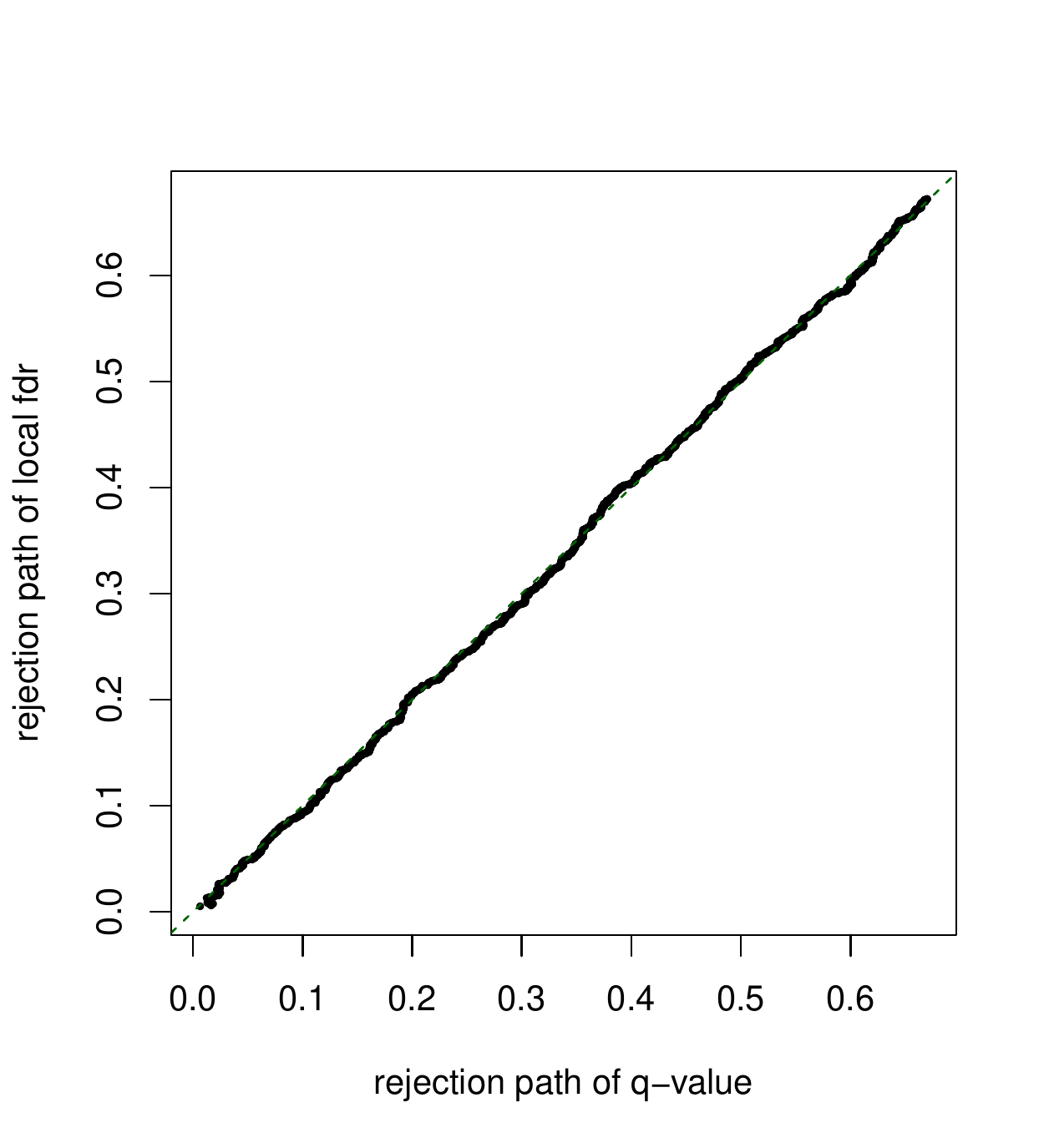} 
\caption{{\bf comparison of rejection paths of the {\it lfdr} and the $q$-value methods in Hedenfalk data.} Although the number of tests is modest ($m = 3170$), the two rejection paths appear to be highly concordant.\label{lfdr_vs_q_hedenfalk.fig} }
\end{figure}

\subsection{Model Diagnosis for Bayesian FDR Control Procedures}

We perform simulations to illustrate the importance of post-fitting model diagnosis for Bayesian FDR control procedures, especially for the PEB-based approaches. 
Adopting a similar simulation scheme described in the previous section (i.e., simulating $z$ or $z^2$-statistics from various alternative distributions including heavy-tailed, bi-modal distributions at the two different extremes),  we generate data from the two-group model with a spectrum of alternative scenarios. 
Our aim is to examine the robustness of the PEB-based Bayesian FDR control procedures under different modes of model misspecification.
We are also interested in investigating the ability of standard model diagnosis techniques for detecting the inadequate model fitting and assessing its consequence on FDR control. 

For demonstration, we focus on a recently proposed Bayesian FDR control method, ASH, a PEB implementation assuming the UA \citep{Stephens2016}.
Briefly, ASH, implemented in the R package {\em ashr}, has the ability to flexibly model uni-modal alternative distributions from a mixture of a rich class of base distributions. 
The weights for each basis distribution function are estimated from data by an EM algorithm. 
To serve our purpose in this demonstration, we intentionally limit the base distributions to the class of Normal distributions with mean 0 and variable values of variance parameters, i.e., the alternative model assumes
\begin{equation*}
    f_1 = \sum_{k=1}^K  \omega_k \, {\rm N}(0, \sigma_k^2),
\end{equation*}
where $\omega_k$'s are the weight parameters to be estimated by the EM algorithm and  a grid  of $\sigma_k$'s values are pre-selected in a data-driven way \citep{Stephens2016}.
It is also important to note that the Bayes factors computed under the assumed alternative model are monotonic to the two-sided $p$-values derived from the $z$-statistics.
We anticipate that this modeling strategy would be robust (although not always perfect) if the UA assumption is indeed correct, but potentially problematic if the ZA actually holds.

To perform model diagnosis, we compare the fitted mixture distribution, 
\begin{equation*}
 \hat f_c = \hat \pi_0 f_0 + (1 - \hat \pi_0) \sum_{k=1}^K  \hat \omega_k \, {\rm N}(0, \sigma_k^2) 
\end{equation*}
with the observed mixture distribution from data.
In particular, we compare a full range of theoretical quantiles computed from $\hat f_c$ with the observed sample quantiles. 
Because of the large sample size employed in the simulation ($m = 10,000$), we are able to assess an asymptotic $p$-value for each pair of quantile values compared \citep{FergusonBook}, which we use to evaluate the severity of the mismatch between the fitted theoretical quantiles and the observed ones. 
This strategy is indeed a standard approach to Bayesian model diagnosis by noting that $\hat f_c$ is an approximation of posterior predictive distribution as shown in Equation (\ref{ppd.appx}).

For each simulated data set, we also construct an oracle rejection path by computing the likelihood ratios/Bayes factors assuming the true alternative model and plugging in the true $\pi_0$.
We also similarly construct an {\em expected} rejection path by utilizing a two-sided $p$-value for each test data point and the estimated value of $\hat \pi_0$ from the EM algorithm. 
Comparing the rejection path from the Bayesian approach to the expected rejection path should provide some assessment of goodness-of-fit according to the theoretical result of Theorem 2.  

First, we highlight the results from three distinct alternative distributions.
\paragraph{1. Scaled $t$-distribution with 10 degrees of freedom.} This alternative distribution follows the UA but modestly deviates from the specific parametric assumption.   
Despite the imperfect model specification, the model diagnosis indicates that the fit by ASH is mostly adequate (Table \ref{sim_t.tbl}), i.e., there are no obvious mismatches between the estimated and sample quantiles. 
The comparison between the ASH rejection path and the oracle path (Figure \ref{sim_compare.fig}) indicates that the ASH result is slightly conservative, i.e., FDRs are over-estimated by ASH, especially at more relaxed threshold values.
The $q$-value method behave much more conservatively, judging by the comparison of its rejection path with the oracle path.     
Finally, we find that the ASH rejection path is highly concordant to the expected rejection path, which validates the result of model diagnosis and Theorem 2. 

\begin{table}[h!]
\centering
\caption{\label{sim_t.tbl} Model diagnosis for alternative scenario 1 ($t$-distribution) } 
 \begin{tabular}{c | c c c c c c c c c c } 
 \hline
  ~  & 10\% & 20\% & 30\% & 40\% & 50\% & 60\% & 70\% & 80\% & 90\% \\ 
 \hline
 sample quantile &  0.051 & 0.214 & 0.495 & 0.962 & 1.710 & 2.955 & 5.116 & 8.811 & 16.81 \\
 fitted quantile  &  0.052 & 0.214 & 0.509 & 0.983 & 1.728 & 2.928 & 4.974 & 8.723 & 16.76 \\
 \hline
 $p$-value & 0.781 & 0.935 & 0.415 & 0.446 & 0.700 & 0.723 & 0.249 & 0.667 & 0.910\\
 \hline 
 \end{tabular}

\end{table}

\paragraph{2. Scaled double-exponential distribution.} This alternative distribution follows the UA but has a much heavier tail than the assumed alternative model could sufficiently capture.
The model diagnosis procedure can effectively detect the inadequate fitting in this case. 
Importantly, we identify a  pattern in all severe mismatches between the estimated and sample quantiles: the theoretical quantiles of the fitted distribution are consistently {\em over-estimated} (Table \ref{sim_doublex.tbl}). 
Furthermore, there is also noticeable discordance between the ASH rejection path and the expected rejection path, indicating that the goodness-of-fit condition of Theorem 2 is violated (Figure \ref{sim_compare.fig}).
Comparing to the oracle rejection path, ASH, local fdr and $q$-value approaches all display significantly conservative behaviors, which can be explained by the fact that all three methods over-estimate $\pi_0$. 
However, ASH exhibits a less degree of conservativeness and much-improved power over the existing the ZA-based approaches. 

\begin{table}[h!]
\centering
\caption{\label{sim_doublex.tbl} Model diagnosis for alternative scenario 2 (double exponential distribution)} 
 \begin{tabular}{c | c c c c c c c c c c } 
 \hline
  ~  & 10\% & 20\% & 30\% & 40\% & 50\% & 60\% & 70\% & 80\% & 90\% \\ 
 \hline
 sample quantile &  0.024 & 0.103 & 0.238 & 0.461 & 0.826 & 1.400 & 2.328 & 4.175 & 9.144 \\

 fitted quantile  & 0.028 & 0.114 & 0.266 & 0.503 & 0.859 & 1.403 & 2.305 & 4.093  & 9.331\\ 
 \hline
 $p$-value & $0.048^*$ & $0.021^*$ &  $0.001^*$ & $0.003^*$ & 0.129 & 0.938 & 0.672 & 0.447  & 0.523 \\
 \hline 
 \end{tabular}

\end{table}

\paragraph{3. Bimodal distribution induced by $\Gamma$ distribution with $ k = 0.7$} This alternative distribution departs from the UA  and is more properly described by the ZA.
In this scenario, the model diagnosis procedure also identifies the severely poor fit of the data. 
Importantly, we note that the pattern of mismatch between the estimated and observed quantiles are very different from the case of heavy tail alternative distributions:  the theoretical quantiles of the fitted distribution are consistently {\em under-estimated} (Table \ref{sim_bim.tbl}).  
Furthermore, there is also noticeable discordance between the ASH rejection path and the expected rejection path, indicating that the goodness-of-fit condition in Theorem 2 is violated.
In comparison to the oracle rejection path, we note that ASH is severely under-estimate $\pi_0$, which leads to dangerous anti-conservative behavior in FDR control, whereas the $q$-value method behaves properly under this scenario (Figure \ref{sim_compare.fig}).

\begin{table}[h!]
\centering
 \caption{\label{sim_bim.tbl} Model diagnosis for alternative scenario 3 (bimodal distribution) }
 \begin{tabular}{c | c c c c c c c c c c } 
 \hline
  ~  & 10\% & 20\% & 30\% & 40\% & 50\% & 60\% & 70\% & 80\% & 90\% \\ 
 \hline
 sample quantile &  0.099 & 0.371 & 0.823 & 1.481 & 2.418 & 3.774 & 5.827 & 9.023 & 14.66\\

 fitted quantile & 0.079 & 0.321  & 0.750 & 1.405 &  2.357 & 3.728 & 5.747 & 8.929 & 14.94\\ 
 \hline
 $p$-value & $<0.001^*$ & $<0.001^*$ & $0.003^*$&  $0.048^*$ & 0.281 & 0.570 & 0.480 & 0.570 &  0.279 \\
 \hline 
 \end{tabular}

\end{table}

\begin{figure}
\centering
\begin{tabular}{c c c}
    \includegraphics[width=.25\textwidth]{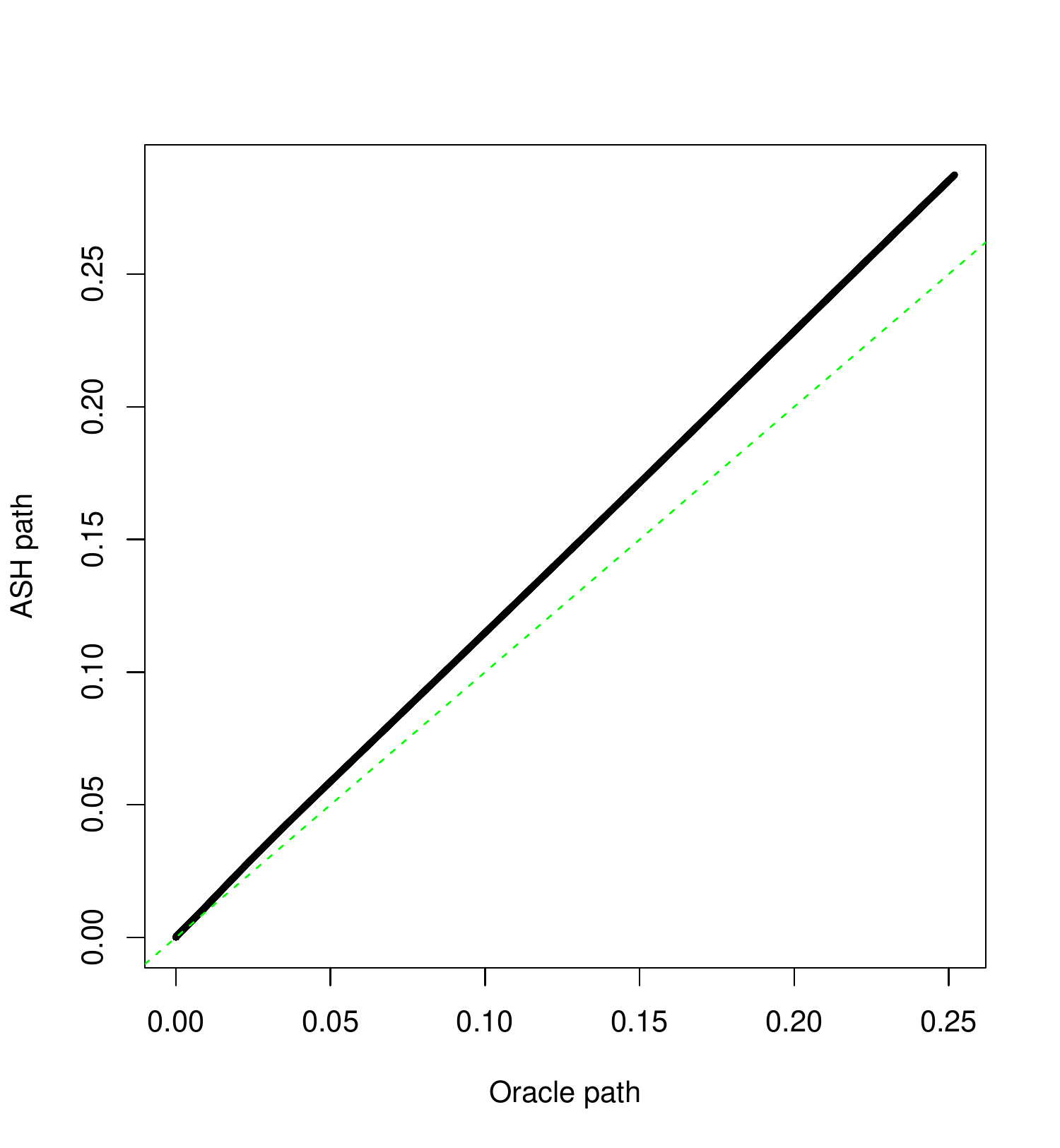}  &    \includegraphics[width=.25\textwidth]{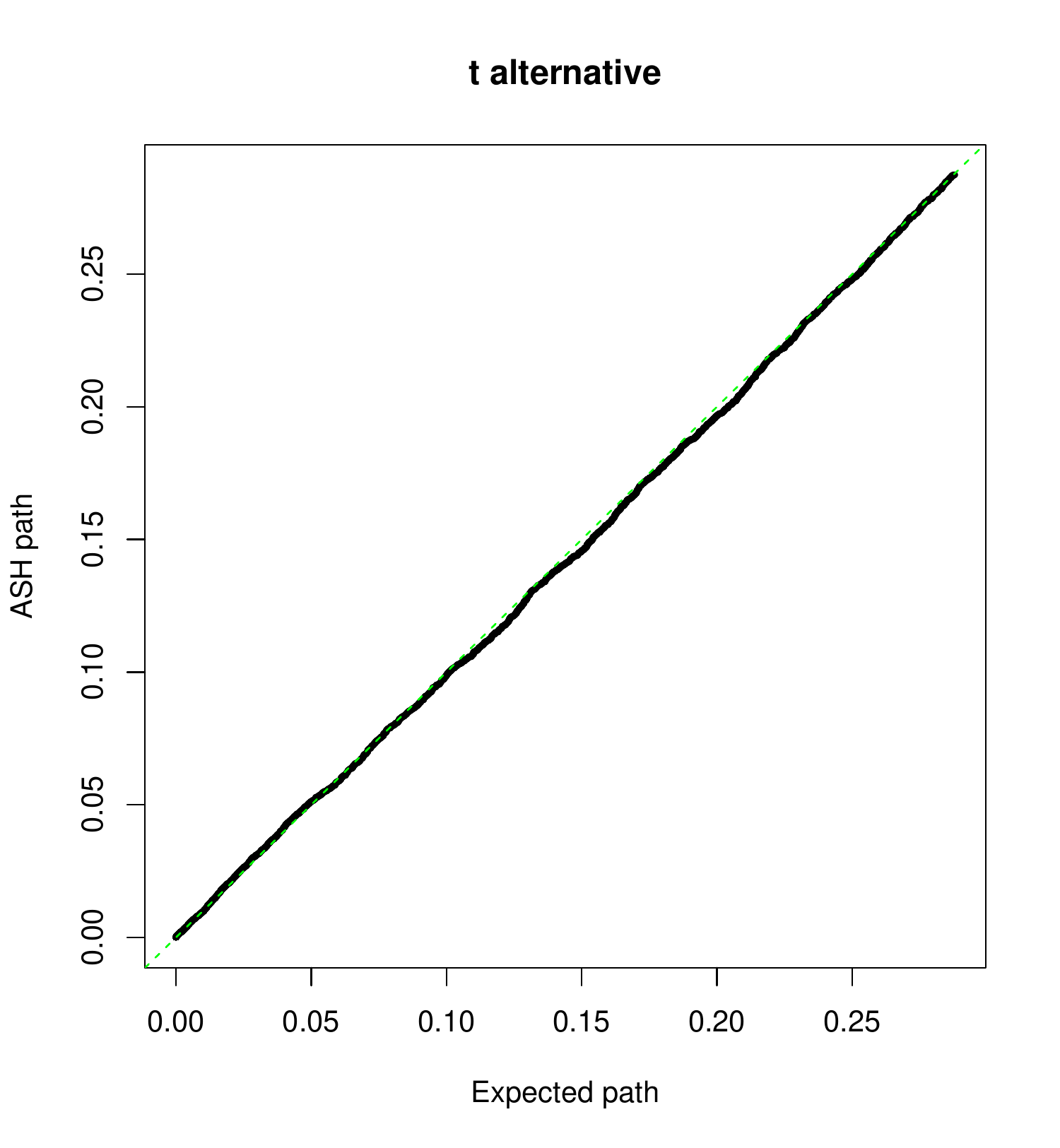} & \includegraphics[width=.25\textwidth]{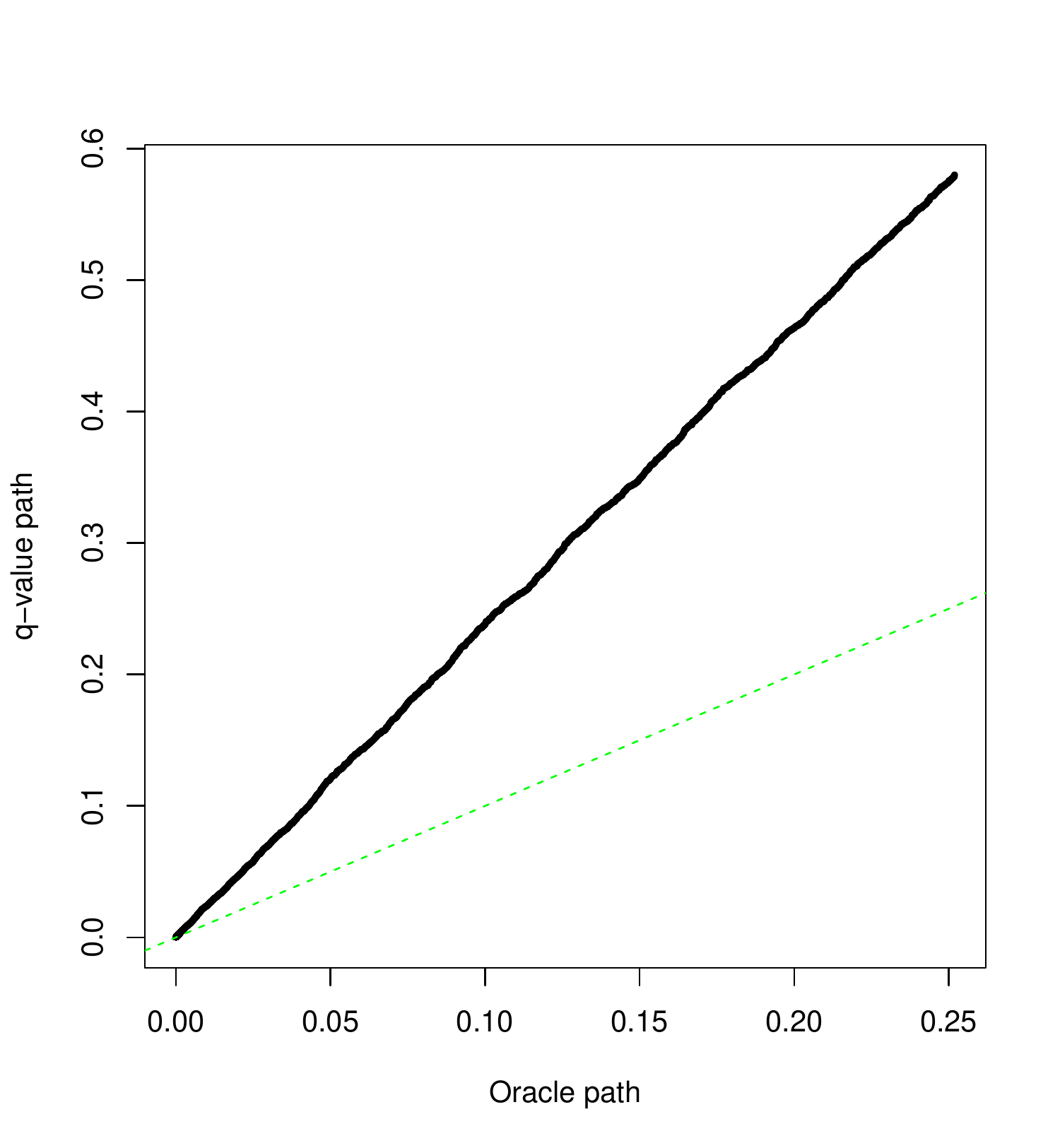}  \\
   \includegraphics[width=.25\textwidth]{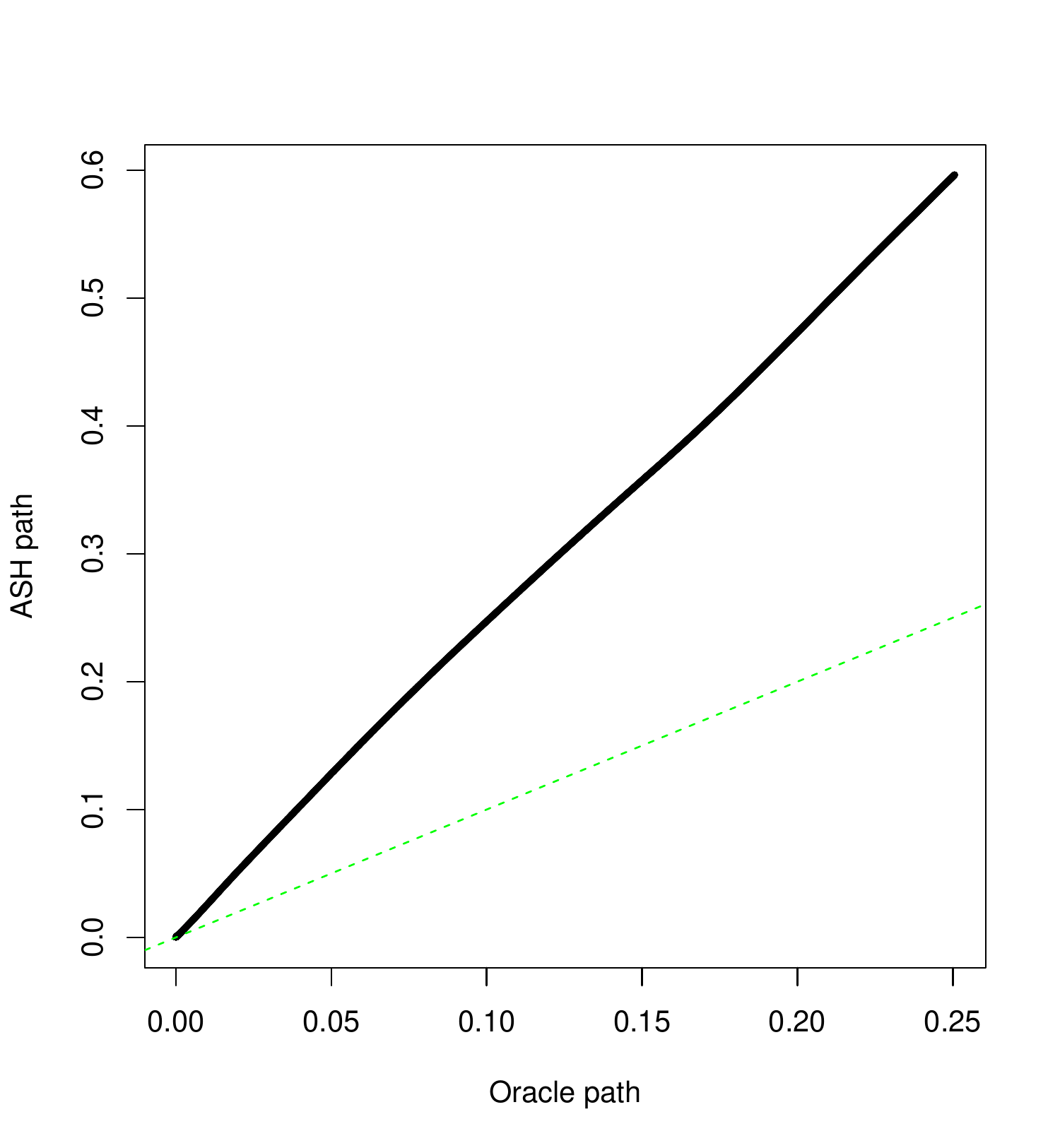}  &    \includegraphics[width=.25\textwidth]{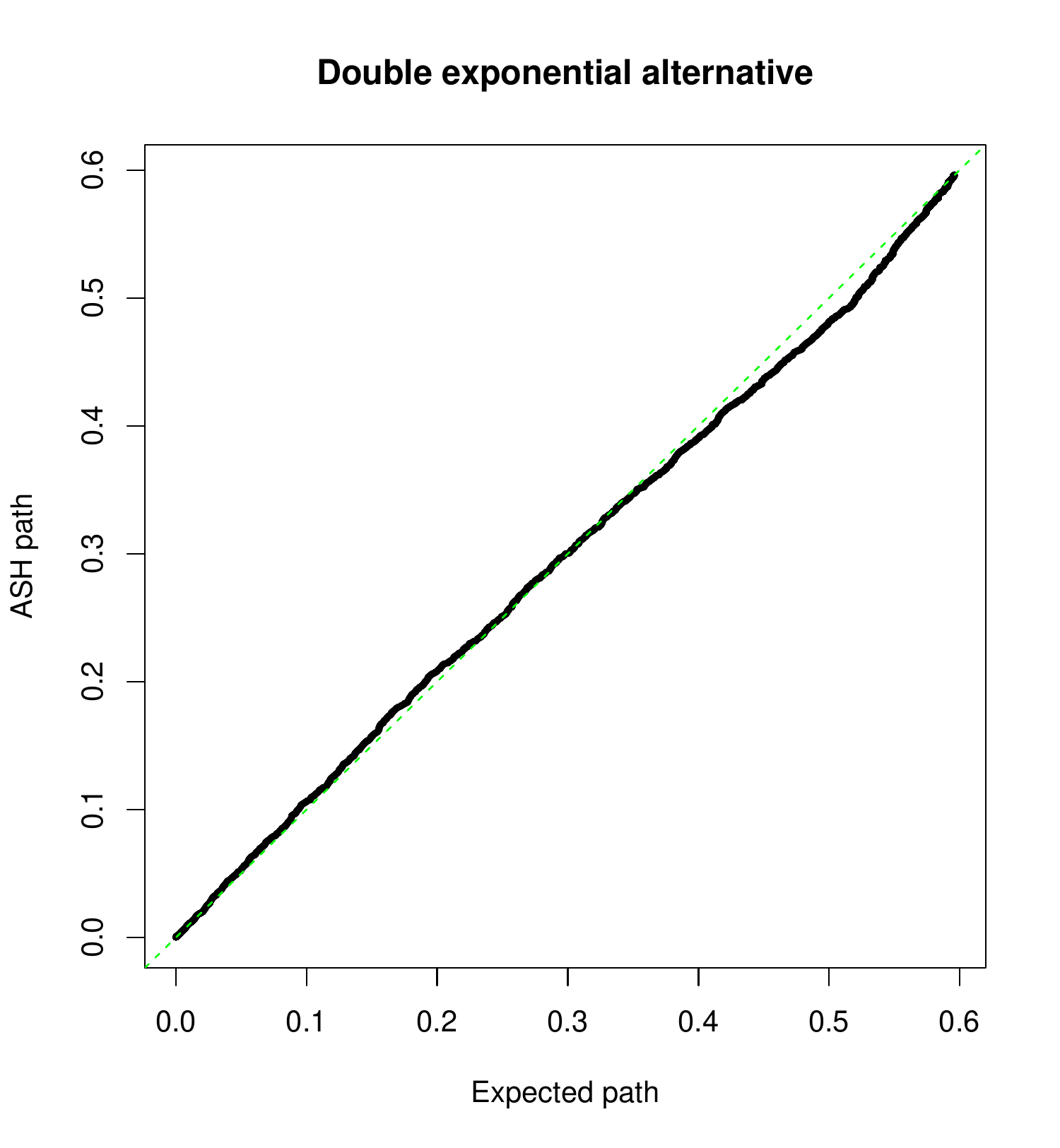} & \includegraphics[width=.25\textwidth]{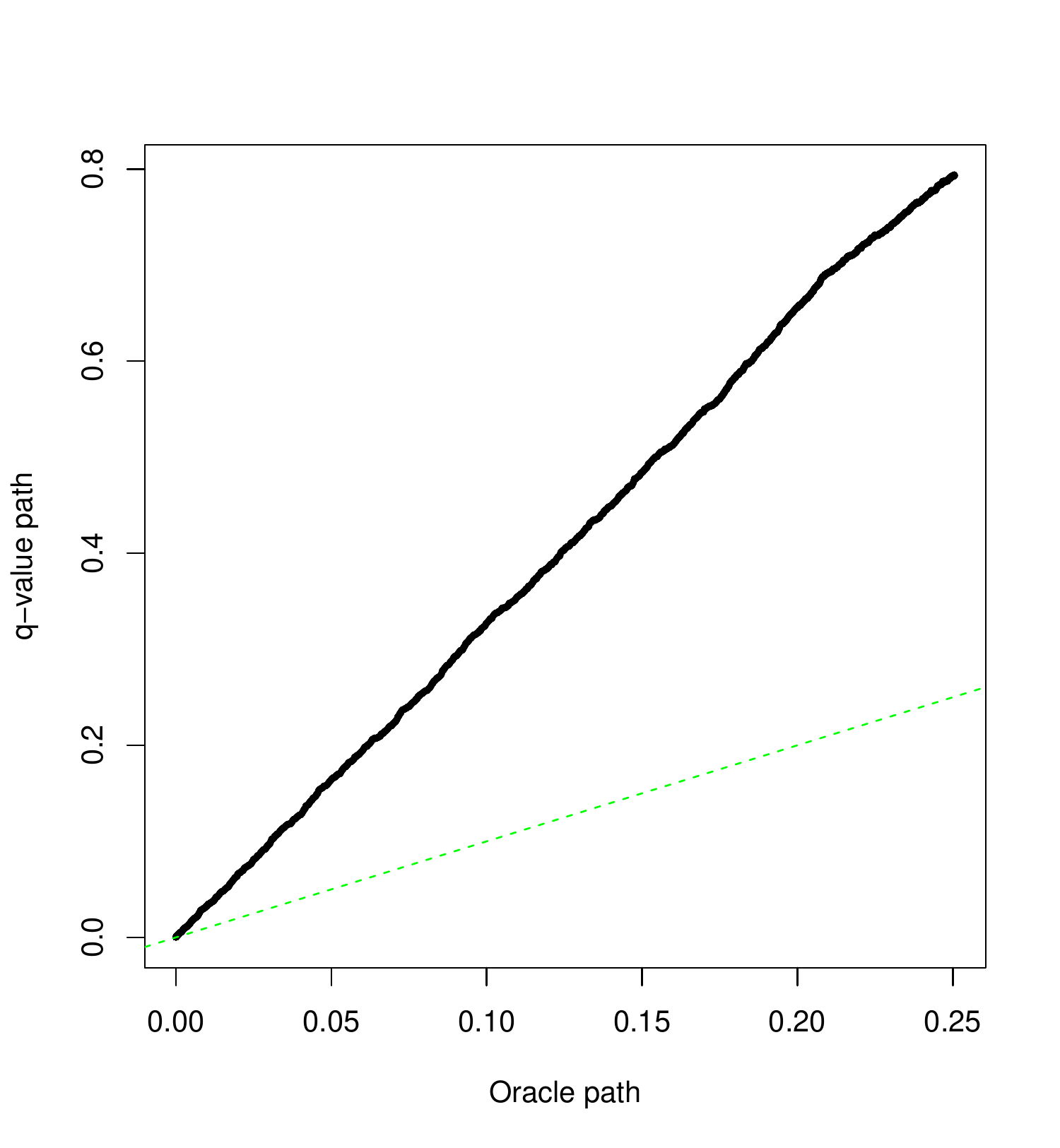}  \\
   \includegraphics[width=.25\textwidth]{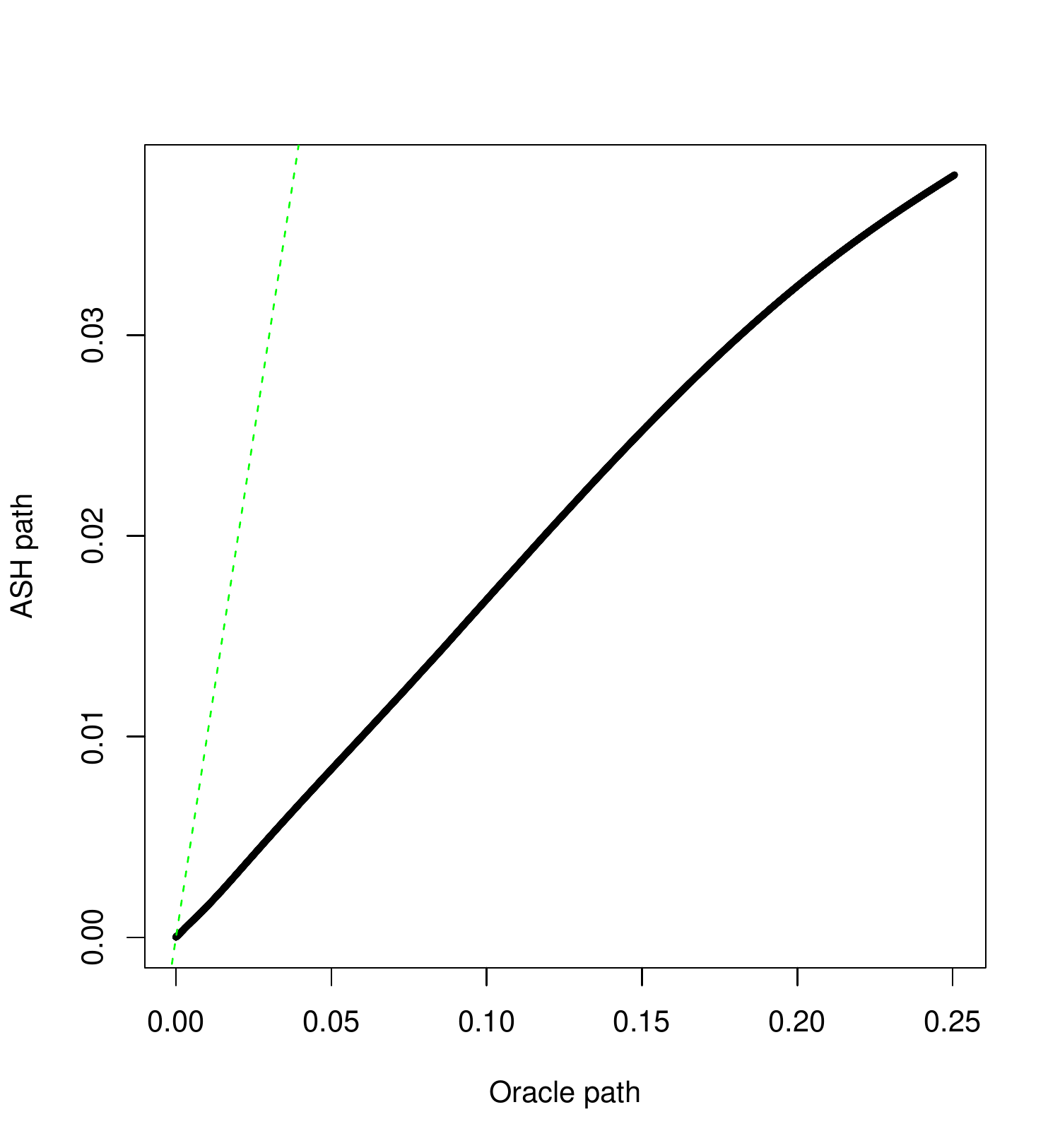}  &    \includegraphics[width=.25\textwidth]{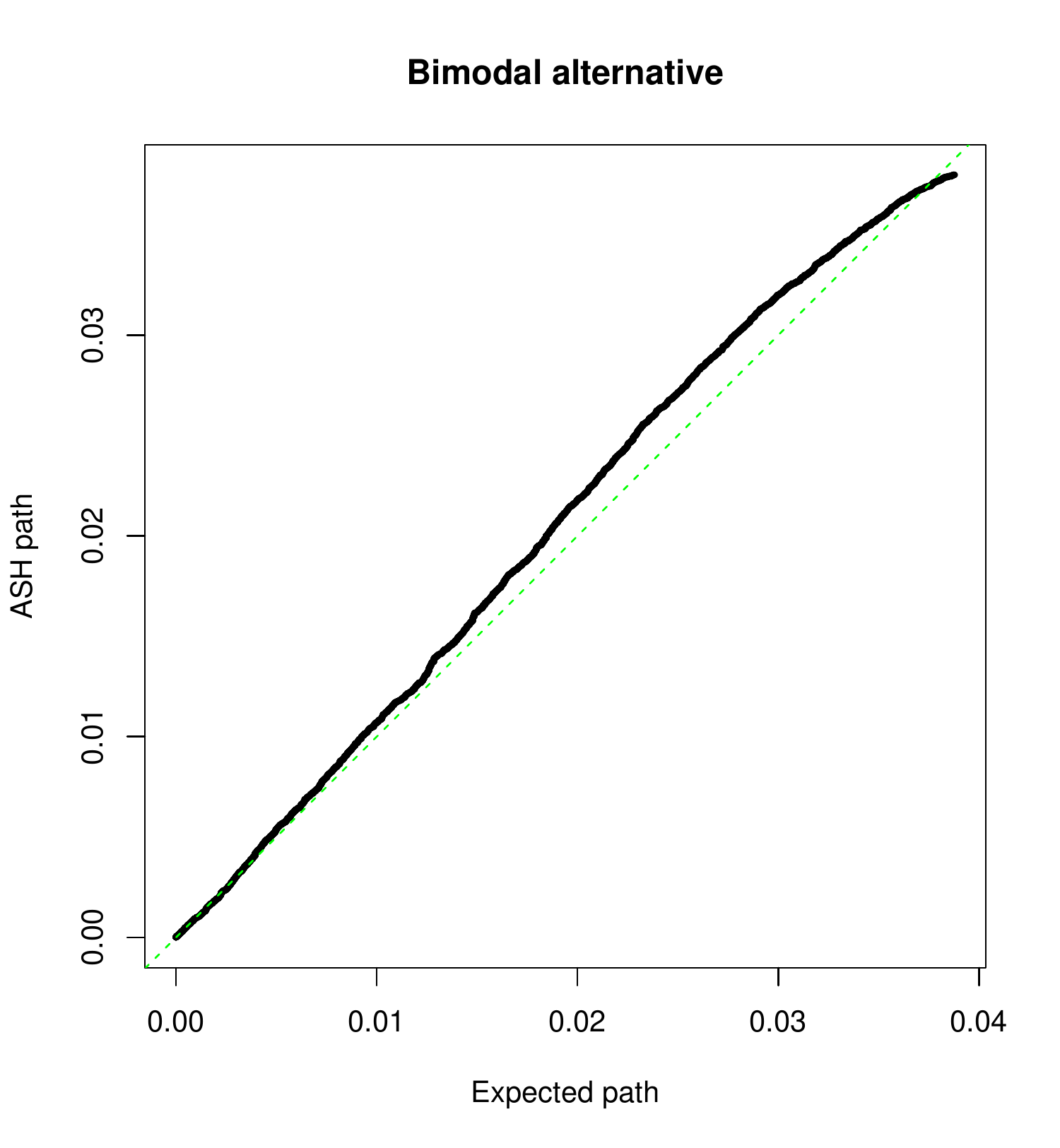} & \includegraphics[width=.25\textwidth]{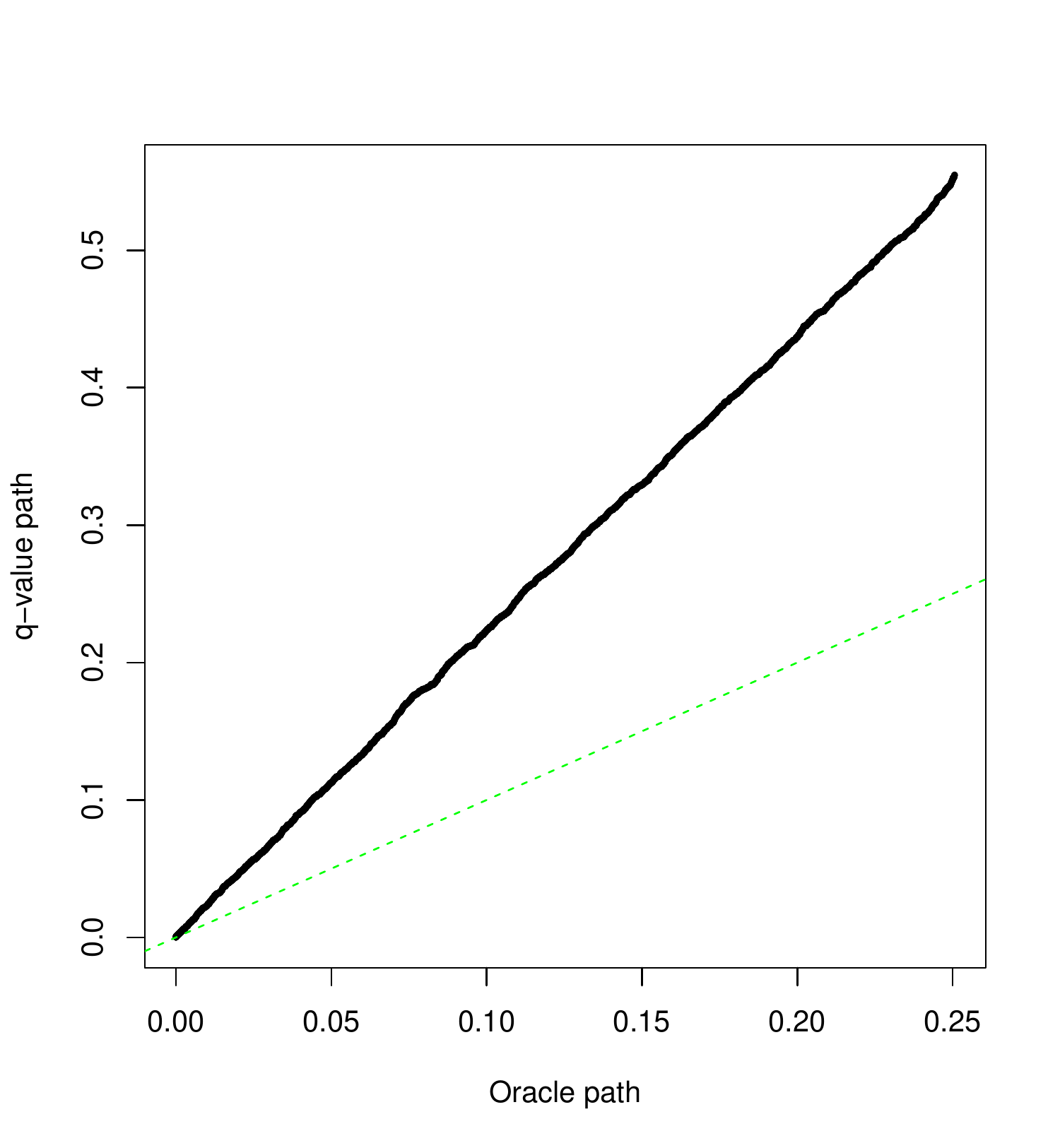}  \\
 \end{tabular}     
\caption{\label{sim_compare.fig}{\bf evaluation and diagnosis of PEB-based Bayesian FDR control procedure assuming UA.} The inference is conducted by the ASH method.  Each row represents a unique alternative distribution: the $t$ and double exponential distributions satisfy the UA (but not the particular parametric assumptions by ASH), the bimodal alternative severely violates the UA assumption. The first column shows the comparison of the observed rejection paths by ASH and the oracle rejection paths for each alternative distribution; the second column compares the observed rejection paths by ASH and the expected rejection paths; and the third column provides the reference comparison between the rejection paths of the $q$-value method and the oracle rejection paths. }
\end{figure}
 
Summarizing from these observations, we conclude that i) model diagnosis method is essential and effective in validating Bayesian FDR control methods; ii) even the data may be inadequately fitted by the assumed parametric model, identification of specific patterns of misfitting can be helpful to determine the conservativeness/anti-conservativeness of a method in FDR control.
It is worth emphasizing that specific patterns of misfitting typically depends on the inference methods/parametric assumptions. 
In the case of ASH, the specific mismatch patterns can be reasonably explained by the following observation: 
with flexible base distributions, ASH typically provides very good fit at the tails of the mixture distributions, where the data points have higher influence on the overall (marginal) likelihood; this behavior, nevertheless, causes inadequate fit for data of the non-tail areas and the specific patterns can be detected by examining a full range of quantiles.

To verify this finding, we conduct additional simulations of $z$-scores using the alternative models from a family of $\Gamma$ distributions with the shape parameter varying from 0.1 to 0.9. For each shape parameter, we generate 100 data sets with $m = 10,000$.
We apply both ASH and $q$-value methods to compute the rejection paths and perform the described model diagnosis procedures. 
Figure  \ref{pi0_est.fig} highlights the difference of $\pi_0$ estimation by the UA (ASH) and the ZA ($q$-value) assumptions. Under the ZA, the $\hat \pi_0$ estimates remain upper-bounds of the true value, but noticeably more conservative when the UA assumption is true. 
On the other hand, we confirm that the UA assumption can lead to severe under-estimation of $\pi_0$ when the ZA is indeed appropriate. 
Finally, the simulation results indicate that the simple model diagnosis method has good sensitivity and specificity in detecting the anti-conservative FDR control behaviors by ASH.
In particular, we flag a test case if at least one theoretical quantile under-estimates the corresponding sample quantile and the mismatch exceeds a pre-defined $p$-value threshold. 
For a suggestive threshold of  $p$-value $= 0.05$, the model diagnosis procedure flags 70.1\% of the test cases where the UA assumption is violated (i.e., the shape parameter $\ge 0.6$). In comparison, it only flags 4.2\% of the test cases when UA assumption is correctly specified (i.e., the shape parameters $\le 0.5$)

\begin{figure}
\centering
\includegraphics[width=.75\textwidth]{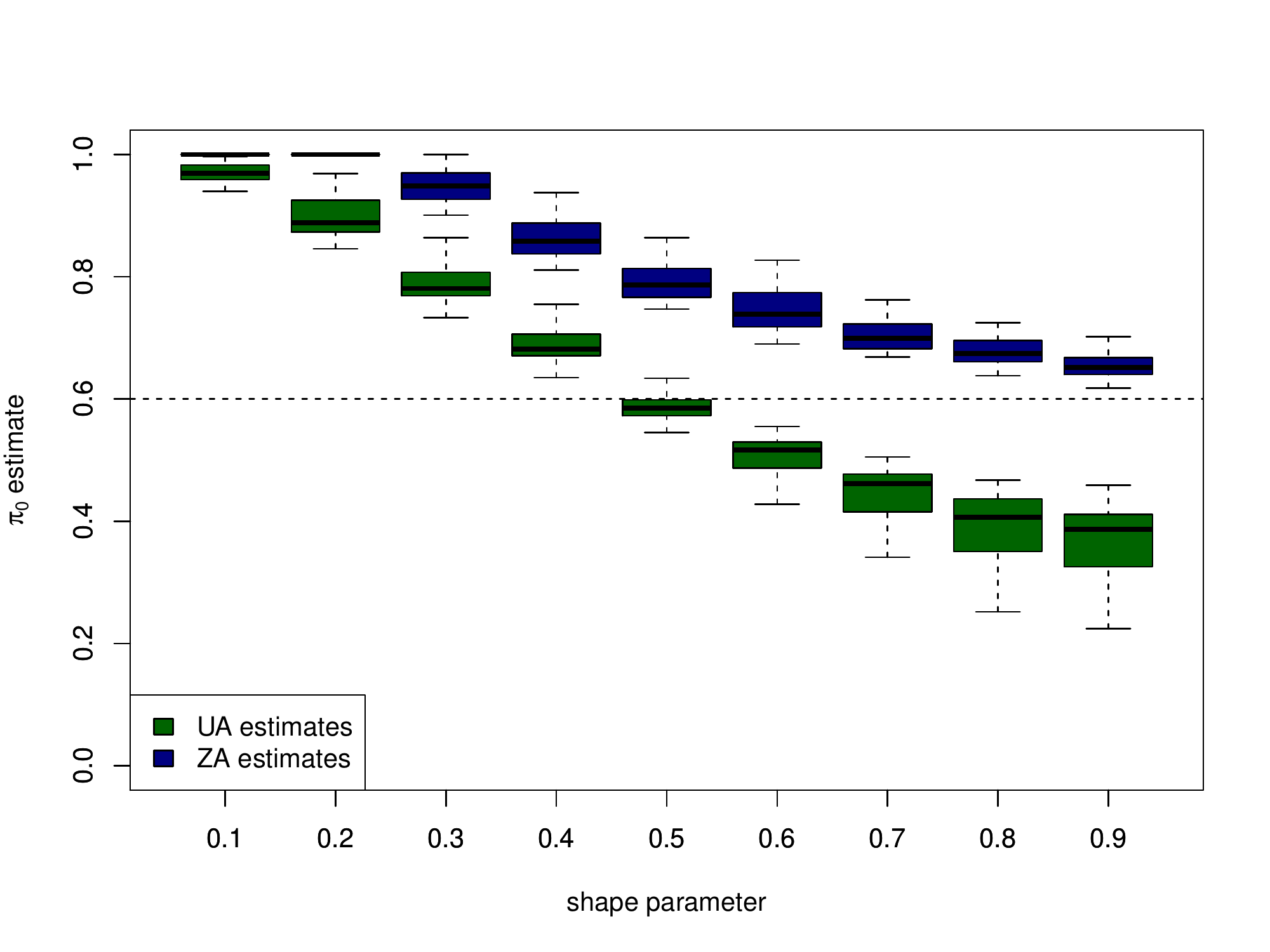}
\caption{\label{pi0_est.fig} {\bf Comparison of UA and ZA estimations of $\pi_0$ in different settings.} The true $\pi_0$, indicated by the dotted horizontal line, is 0.60.  The ZA estimates are obtained by the nonparametric quantile estimator (\ref{q.est}), and the UA estimates are obtained by the EM algorithm assuming a mixture normal alternative distribution.
The ZA estimates are uniformly conservative in all scenarios. The UA estimates, in comparison, are more accurate when the assumption holds (i.e., shape parameter $\le 0.5$); but can be significantly under-estimated if the assumption is severely violated (i.e., shape parameter $ > 0.5$.  }
\end{figure}

\section{Conclusion and discussion}

Our results described in this paper provide a theoretical connection between Bayesian and frequentist FDR control methods in the presence of a large number of tests. 
The setting of large-scale multiple testing creates a near asymptotic scenario where the frequentist property of the Bayesian method can be practically examined.
Our results also have profound implications for applying parametric Bayesian FDR control methods in a practical setting, in particular, the critical importance of post-fitting model diagnosis.
From our theoretical arguments and numerical experiments, we conclude that thorough and careful model checking can indeed improve the robustness of the parametric Bayesian FDR control procedures.
More specifically, we have showcased the numerical recipes in examining both the ZA (by examning the concordance of rejection paths between the {\it locfdr} and the $q$-value methods) and the UA (by Bayesian model diagnosis) assumptions.
Beyond the settings of the two-groups model, we argue that the Bayesian approaches show superior advantages in extensibility and flexibility over the $p$-value based frequentist approaches. 
In fact, we have shown that the ideal/optimal frequentist procedure under such setting should very much resemble the corresponding Bayesian approach.

It has been previously shown in the literature (e.g., \cite{Sun2007}) that Bayesian FDR control procedures can be generalized to a broad class of binary probabilistic classification problem for prioritizing the control of one specific type of misclassification errors, i.e., the false positives.
From this viewpoint, hypothesis testing is simply a special case with null hypotheses being framed as non-discoveries, and as a generalization, applying Bayesian procedures help overcome some practical limitations associated with formulating and solving hypothesis testing problems. 
Here we demonstrate this point in two cases.

\paragraph{Case 1: testing complex null hypothesis} \mbox{} In standard hypothesis testing practice, a key step is to set up a null hypothesis with a well-defined $f_0$ distribution. With the emerging complex data structures in applications, this can be difficult as some ideal null models may involve very complex scenarios. 
An example of this kind is to test colocalization in genetic association studies where the scientific goal is to statistically assess if a genetic variant is simultaneously associated two separate complex traits. Let latent binary indicators $\gamma_i$  and $\delta_i$ denote the association status of the $i$-th genetic variant with trait 1 and trait 2, respectively.  
In the formulation of hypothesis testing problem, the goal is to test
\begin{equation*}
  \begin{aligned}
       &  H_0:  \gamma_i \ne 1 \mbox{  or  } \delta_i \ne 1, \\ 
       & ~~~~~~~~~~~~ vs.  \\
       & H_1:  \gamma_i = 1 \mbox{ and } \delta_i = 1.
   \end{aligned}
\end{equation*}
Although the marginal associations with respect to $\gamma_i$ and $\delta_i$ are relatively easy to test,  the null hypothesis for colocalization involves multiple compatible and mutually exclusive scenarios.
Consequently, the corresponding $f_0$ distribution is a mixture distribution for which $p$-value is difficult to compute or calibrate.
In comparison, the Bayesian approach only needs to focus on evaluating $\Pr(\gamma_i = 1, \delta_i = 1 \mid \mbox{ data})$ for the alternative scenario,  and  the quantity, $1  - \Pr(\gamma_i = 1, \delta_i = 1 \mid \mbox{ data})$, is indeed the desired local fdr.  (A practical Bayesian solution for the colocalization problem can be found in \cite{Wen2017}.)
Another related example is the estimation of the empirical null distribution in the local fdr framework \cite{EfronBook}. Similarly, such idea is extremely difficult, if not impossible, to implement in the frequentist FDR control framework.  

\paragraph{Case 2: flexible definition of discovery} \mbox{} The Bayesian framework allows the flexible definition of discoveries in probabilistic terms, which can lead to more interpretable results for better scientific communication. 
A longstanding scientific problem that motivates the development of FDR control technique is the analysis of microarray and RNA-seq data analysis for identifying differentially expressed genes in two experimental conditions.
The common practice in this context is to test a ``sharp null" hypothesis which states that the difference of the expression levels of a target gene from the two conditions is exactly 0.  
Nevertheless, many have pointed out the caveat of the sharp null hypothesis (e.g., it is scientifically plausible that all genes are differentially expressed under the different conditions) and argued that the real purpose of the analysis is to identify genes with the convincingly large non-zero difference in expression levels, denoted by $\beta$.
Recently, \cite{Stephens2016} derives the local false sign rate (lfsr) as the posterior error probability in estimating the sign of $\beta$, which enables probabilistic classification of large versus small effects and a straightforward application of Bayesian FDR control procedure. 
In another example, \cite{Li2011} defines a discovery in high-throughput genomic experiments if a noticeable observation is replicable in a follow-up validation study. 
They derive a probabilistic measure, irreproducible discovery rate (IDR), and apply the Bayesian FDR control procedure to identify those convincingly reproducible signals.
In both examples, there seem no straightforward applications of the frequentist FDR control procedure.

Finally, we conclude that our work in this paper should motivate more applications of Bayesian FDR control approaches involving large-scale and complex data: the Bayesian approaches can be flexible, powerful, robust, and most importantly, verifiable by examining their frequentist properties. 

\section{Resources}

The code implementing the illustrations and simulations used in this paper can be found in the github repo: \url{https://github.com/xqwen/unified_fdr}.

\newpage

\begin{appendix}
\section{Proof of Theorem 1}
\begin{proof}
By equation (\ref{bfdr}), 
\begin{equation}
   \BFDR \left (t_{b,i} =  u^*_{(i)} \right) = \frac{\left( \sum_{k=1}^m u_{k}^*  \cdot  \lv\{ u^*_k \le u^*_{(i)}\}  \right) \big/ m }{i/m}.
\end{equation}
Note that, individually, because $u_k^*(z_k) = {\rm E}( \lv \{ \gamma_k = 0 \} \mid \pi_0, z_k)$, it follows that
\begin{equation}\label{exp.condition}
  {\rm E} \left[  u_k^* \cdot \lv{\{ u_k^* \le  u_{(i)}^* \} }\,  \bigl\vert\, \pi_0, u_{(i)}^* \right]  = {\rm E} \left[ \lv{\{ u_k^* \le  u_{(i)}^* \mbox{ and }  \gamma_k = 0 \}} \, \bigl\vert\, \pi_0, u_{(i)}^* \right] = \pi_0 \, F_{u^*} ( u_{(i)}^* ), \forall ~ k, 
\end{equation}
% expectation is w.r.t $z_k$, the form is based on the total expectation law
where $F_{u^*}(\cdot)$ denotes the CDF of  $u^*$'s when data are generated from $f_0$.  
It then follows directly from the strong law of large number (SLLN),  
\begin{equation*}
 \frac{\left( \sum_{k=1}^m u_{k}^*  \cdot  \lv\{ u^*_k \le u^*_{(i)}\}  \right) \big/ m }{i/m}   \xrightarrow{a.s.} \frac{m \pi_0 F_{u^*}(u^*_{(i)})}{i}~ \mbox{ for all  $i$}.
\end{equation*}
Note,  Equation (\ref{bf2pip}) indicates that $u^*(z)$ is a monotone function of $\BF^*(z)$, and implies that there exists a monotone mapping, denoted by $G(\cdot)$, between  $u^*$ and the corresponding $p$-value by the assumption 2.  As a consequence,  $G(u^*_{(i)}) = p_{(i)}$.  
It follows that 
\begin{equation*}
 F_{u^*}(u^*_{(i)}) = \Pr \left[ u^* \le u^*_{(i)} \mid H_0 \right] = \Pr \left[ G(u^*) \le G(u^*_{(i)}) \mid H_0 \right] = p_{(i)}.
\end{equation*}

Finally, by equation (\ref{ffdr.exp}) we conclude that
\begin{equation*}
  \BFDR \left (t_{b,i} =  u^*_{(i)} \right)  \xrightarrow{a.s.} \FDR \left(t_{f,i} =  p_{(i)} \right) ~\mbox{for all $i$}.
\end{equation*}
\end{proof}
 
\newpage

\section{Proof of Theorem 2}

\begin{proof}
The proof is built upon the same technical arguments used in \cite{Sun2007} (SC in brief, henceforward). 

We first define
\begin{equation*}
  \tilde u (z)  := \frac{\hat \pi_0 f_0 (z) }{f_c(z)} = {\rm E}\left( \lv\{\gamma_i=0\} \mid \hat \pi_0, z  \right).
\end{equation*}

Given that ${\rm E} || \hat f_c - f_c ||^2 \to 0$, it can be shown, by Lemma A.1 and Lemma A.2 of SC, that
\begin{equation} \label{pip.conv}
  \hat u \xrightarrow{p} \tilde u, \forall i,
\end{equation}
and consequently, for any given order statistics from observed sequences
\begin{equation} \label{pip_os.conv}
  \hat u_{(k)} \xrightarrow{p} \tilde u_{(k)}, \forall k,
\end{equation}
%Note that,
%\begin{equation} \label{pip.exp}
% {\rm E} \left[  \tilde u_k \cdot \lv{\{ \tilde u_k \le  \tilde u_{(i)} \} }\,  \bigl\vert\, \hat \pi_0 \right]  = \hat \pi_0 \, F_{u^*} ( u_{(i)}^* ), \forall ~ k. 
%\end{equation}
Equation (\ref{pip.conv}) and the Helly-Bray theorem imply
 \begin{equation}
   g(t) = {\rm E} \left[  \hat u_k \cdot \lv{\{ \hat u_k \le  t \} } \,  \bigl\vert\, \hat \pi_0, t \right]  \to    {\rm E} \left[  \tilde u_k \cdot \lv{\{ \tilde u_k \le  t \} }\,  \bigl\vert\, \hat \pi_0, t \right] = h(t) , ~ \forall  t \in [0,1] .
\end{equation} 
Furthermore  by Equation (\ref{pip_os.conv}), $ g( \hat u_{(k)} ) \xrightarrow{p} h(\tilde u_{(k)}) $, i.e.,
\begin{equation}\label{exp.conv}
   {\rm E} \left[  \hat u_k \cdot \lv{\{ \hat u_k \le  \hat u_{(i)} \} }\,  \bigl\vert\, \hat \pi_0, \hat u_{(i)} \right] \xrightarrow{p}   {\rm E} \left[  \tilde u_k \cdot \lv{\{ \tilde u_k \le  \tilde u_{(i)} \} }\,  \bigl\vert\, \hat \pi_0, \tilde u_{(i)} \right] , ~ \forall ~ k.  % \hat \pi_0 \, F_{ \tilde u} ( \tilde u_{(i)} ), \forall ~ k.
\end{equation} 

Next, we show that 
\begin{equation} \label{hat.exp}
 {\rm E} \left[  \hat u_k \cdot \lv{\{ \hat u_k \le  \hat u_{(i)} \} }\,  \bigl\vert\, \hat \pi_0, \hat u_{(i)} \right] \xrightarrow{p} \hat \pi_0 p_{(i)}.  
\end{equation}
This is because 
\begin{equation}\label{tilde.exp}
\begin{aligned}
   {\rm E} \left[  \tilde u_k \cdot \lv{\{ \tilde u_k \le  \tilde u_{(i)} \} }\,  \bigl\vert\, \hat \pi_0, \tilde u_{(i)} \right]  & = {\rm E} \left( \lv\{ \tilde u_k \le \tilde u_{(i)} \mbox{  and } \gamma_k = 0 \} \mid \hat \pi_0, \tilde u_{(i)} \right) \\
      & = \hat \pi_0 F_{\tilde u}\left( \tilde u_{(i)} \right),
  \end{aligned}    
\end{equation}
where $ F_{\tilde u}\left( \cdot \right) $ denotes the cdf of $\tilde u (z) $, and similarly we denote $F_{\hat u} \left(\cdot \right)$ as the cdf of  $\hat u (z)$ under the null.
Note that, by (\ref{pip.conv}) and (\ref{pip_os.conv}), 
\begin{equation} \label{cdf.conv}
  F_{\tilde u}\left( \tilde u_{(i)} \right) = {\rm E} \left( \lv\{ \tilde u \le \tilde u_{(i)} \} \mid H_0,  \tilde u_{(i)} \right) \xrightarrow{p} {\rm E} \left( \lv\{ \hat u \le \hat u_{(i)} \} \mid H_0,  \tilde u_{(i)} \right) = F_{\hat u} \left(\hat u_{(i)} \right)
\end{equation}
Under the assumption $\BF(z) = \frac{f_0(z)}{\hat f_1(z)}$ is monotonic to the corresponding $p$-value,  it follows that for any given $\hat \pi_0$, there exists a monotone transformation, $\hat G(\cdot)$, between $\hat u$ and its corresponding $p$-value, such that $ \hat G(\hat u) = p$, and $\hat G(\hat u_{(i)}) = p_{(i)}$ for any given order statistics.  
It follows that,
 \begin{equation} \label{pval.eq}
     F_{\hat u}\left( \hat u_{(i)} \right) = \Pr \left[ \hat u \le \hat u_{(i)} \mid H_0 \right] =  \Pr \left[ \hat G(\hat u) \le \hat G(\hat u_{(i)}) \mid H_0 \right] = p_{(i)}. 
\end{equation}

Jointly by (\ref{exp.conv}), (\ref{tilde.exp}), (\ref{cdf.conv}) and (\ref{pval.eq}), we have established (\ref{hat.exp}), i.e.,
$${\rm E} \left[  \hat u_k \cdot \lv{\{ \hat u_k \le  \hat u_{(i)} \} }\,  \bigl\vert\, \hat \pi_0, \hat u_{(i)} \right] \xrightarrow{p} \hat \pi_0 p_{(i)}. $$
Finally, note that 
\begin{equation}
   {\rm BFDR}\left( t_{b,i} = \hat u_{(i)}; \hat \pi_0 \right)  \xrightarrow{a.s.}  \frac{{\rm E} \left[  \hat u_k \cdot \lv{\{ \hat u_k \le  \hat u_{(i)} \} }\,  \bigl\vert\, \hat \pi_0 \right]}{i/m},
\end{equation}
and 
\begin{equation}
   {\rm FDR}\left( t_{f,i} = p_{(i)}; \hat \pi_0 \right)  \xrightarrow{a.s.}  \frac{\hat \pi_0 p_{(i)}}{i/m},
\end{equation}
we therefore conclude that 
  \begin{equation}
     {\rm BFDR}\left( t_{b,i} = \hat u_{(i)}; \hat \pi_0 \right) \xrightarrow{p} {\rm FDR}\left( t_{f,i} = p_{(i)}; \hat \pi_0 \right), \forall i.
  \end{equation}

In addition, if the condition $\hat \pi_0 \xrightarrow{p} \pi_0$ is satisfied, i.e., $\pi_0$ is accurately estimated by $\hat \pi_0$.
It is straightforward to show that
\begin{equation}
    \BFDR \left (t_{b,i} =  \hat u_{(i)} ; \hat \pi_0 \right)  \xrightarrow{p}  \frac{\pi_0 p_{(i)}}{i/m}.
\end{equation}
By Theorem 1, it is sufficient to conclude 
\begin{equation}
     \BFDR \left (t_{b,i} =  \hat u_{(i)} ; \hat \pi_0 \right)  \xrightarrow{p} \BFDR(t_{b,i} = u^*_{(i)}), \forall i.
\end{equation}

% 
%The existence of monotone mapping between $\BF(z_k) := f_0(z_k)/\hat f_1 (z_k) $ and the corresponding $p$-value indicates that for any given $\hat \pi_0$,  there exists a monotone mapping, $\hat G(\cdot)$, between the sequences $\{\hat u_{(k)}\}$ and $\{p_{(k)}\}$, such that
%\begin{equation}
%    \hat G \left(\hat u_{(k)}\right) = p_{(k)}.
%\end{equation}
%Along with Equation (\ref{pip.conv}), it implies
%

\end{proof}

\newpage

\section{Proof of Proposition 1}
\begin{proof}
Note that 
\begin{equation*}
 \BFDR \left (t_{b,i} =  u^*_{(i)} \right) = \frac{\left( \sum_{k=1}^m u_{k}^*  \cdot  \lv\{ u^*_k \le u^*_{(i)}\}  \right) \big/ m }{i/m}.
\end{equation*}
Individually, 
\begin{equation*}
  {\rm E} \left[  u_k^* \cdot \lv_{\{ u_k^* \le  u_{(i)}^* \} }\,  \bigl\vert \, d_i \right]  = {\rm E} \left[ \lv_{\{wlr^*_k \ge  wlr_{(i)}^*  \mbox{ and }  \gamma_i = 0 \}} \, \bigl\vert\,  d_i \right] = \pi_{d_i,0} \, [1- F_{d_i, wlr^*} ( u_{(i)}^* )], 
\end{equation*}
for all $k$, and
\begin{equation*}
 {\rm Var}\left[ \lv_{\{wlr^*_k \ge  wlr_{(i)}^*  \mbox{ and }  \gamma_i = 0 \}} \, \bigl \vert\,  d_i \right]  \mbox{ is obviously bounded.}
\end{equation*}
It follows from the Kolmogorov's SSLN that
 \begin{equation*}
    \BFDR\left ( t_{b,i} = u^*_{(i)} \right )  \xrightarrow{a.s.}  \FDR \left (t_{f,i} = wlr_{(i)}^* \right) \mbox{ for all } i.
\end{equation*}

\end{proof}

\end{appendix}

\newpage

\bibliography{unified_fdr.bib}

\begin{thebibliography}{26}
\providecommand{\natexlab}[1]{#1}
\providecommand{\url}[1]{\texttt{#1}}
\expandafter\ifx\csname urlstyle\endcsname\relax
  \providecommand{\doi}[1]{doi: #1}\else
  \providecommand{\doi}{doi: \begingroup \urlstyle{rm}\Url}\fi

\bibitem[Benjamini and Hochberg(1995)]{Benjamini1995}
Y.~Benjamini and Y.~Hochberg.
\newblock {Controlling the False Discovery Rate: A Practical and Powerful
  Approach to Multiple Testing}.
\newblock \emph{Journal of the Royal Statistical Society. Series B (Statistical
  Methodology)}, 57\penalty0 (1):\penalty0 289--300, 1995.

\bibitem[Cook et~al.(2006)Cook, Gelman, and Rubin]{Cook2006}
S.~R. Cook, A.~Gelman, and D.~B. Rubin.
\newblock Validation of software for bayesian models using posterior quantiles.
\newblock \emph{Journal of Computational and Graphical Statistics}, 15\penalty0
  (3):\penalty0 675--692, 2006.

\bibitem[Cox and Hinkley(1979)]{CoxBook}
D.~R. Cox and D.~V. Hinkley.
\newblock \emph{Theoretical statistics}.
\newblock CRC Press, 1979.

\bibitem[Efron(2008)]{Efron2008}
B.~Efron.
\newblock {Microarrays, Empirical Bayes and the Two-Groups Model}.
\newblock \emph{Statistical Science}, 23\penalty0 (1):\penalty0 1--22, Feb.
  2008.

\bibitem[Efron(2012)]{EfronBook}
B.~Efron.
\newblock \emph{Large-scale inference: empirical Bayes methods for estimation,
  testing, and prediction}, volume~1.
\newblock Cambridge University Press, 2012.

\bibitem[Efron et~al.(2001)Efron, Tibshirani, Storey, and Tusher]{Efron2001}
B.~Efron, R.~Tibshirani, J.~D. Storey, and V.~Tusher.
\newblock Empirical bayes analysis of a microarray experiment.
\newblock \emph{Journal of the American Statistical Association}, 96\penalty0
  (456):\penalty0 1151--1160, 2001.

\bibitem[Ferguson(2017)]{FergusonBook}
T.~S. Ferguson.
\newblock \emph{A course in large sample theory}.
\newblock Routledge, 2017.

\bibitem[Genovese and Wasserman(2004)]{Genovese2004}
C.~Genovese and L.~Wasserman.
\newblock {A stochastic process approach to false discovery control}.
\newblock \emph{Annals of Statistics}, 32\penalty0 (3):\penalty0 1035--1061,
  2004.

\bibitem[Genovese et~al.(2006)Genovese, Roeder, and Wasserman]{Genovese2006}
C.~R. Genovese, K.~Roeder, and L.~Wasserman.
\newblock False discovery control with p-value weighting.
\newblock \emph{Biometrika}, 93\penalty0 (3):\penalty0 509--524, 2006.

\bibitem[Ghosh et~al.(2006)Ghosh, Chen, and Raghunathan]{Ghosh2006}
D.~Ghosh, W.~Chen, and T.~Raghunathan.
\newblock {The false discovery rate: a variable selection perspective}.
\newblock \emph{Journal of Statistical Planning and Inference}, 136\penalty0
  (8):\penalty0 2668--2684, Aug. 2006.

\bibitem[Ignatiadis et~al.(2016)Ignatiadis, Klaus, Zaugg, and Huber]{IHW2016}
N.~Ignatiadis, B.~Klaus, J.~B. Zaugg, and W.~Huber.
\newblock Data-driven hypothesis weighting increases detection power in
  genome-scale multiple testing.
\newblock \emph{Nature methods}, 13\penalty0 (7):\penalty0 577--580, 2016.

\bibitem[Kass and Steffey(1989)]{Kass1989}
R.~E. Kass and D.~Steffey.
\newblock Approximate bayesian inference in conditionally independent
  hierarchical models (parametric empirical bayes models).
\newblock \emph{Journal of the American Statistical Association}, 84\penalty0
  (407):\penalty0 717--726, 1989.

\bibitem[Li et~al.(2011)Li, Brown, Huang, Bickel, et~al.]{Li2011}
Q.~Li, J.~B. Brown, H.~Huang, P.~J. Bickel, et~al.
\newblock Measuring reproducibility of high-throughput experiments.
\newblock \emph{The annals of applied statistics}, 5\penalty0 (3):\penalty0
  1752--1779, 2011.

\bibitem[M\"{u}ller et~al.(2004)M\"{u}ller, Parmigiani, Robert, and
  Rousseau]{Mueller2004}
P.~M\"{u}ller, G.~Parmigiani, C.~Robert, and J.~Rousseau.
\newblock {Optimal Sample Size for Multiple Testing: The Case of Gene
  Expression Microarrays}.
\newblock \emph{Journal of the American Statistical Association}, 99\penalty0
  (468):\penalty0 990--1001, Dec. 2004.

\bibitem[M\"{u}ller et~al.(2006)M\"{u}ller, Parmigiani, and Rice]{Mueller2006}
P.~M\"{u}ller, G.~Parmigiani, and K.~Rice.
\newblock {FDR and Bayesian multiple comparisons rules}.
\newblock In \emph{Bayesian Statistics 8}, volume~0, pages 349--370. Oxford
  University Press, 2006.

\bibitem[Muralidharan(2010)]{Muralidharan2010}
O.~Muralidharan.
\newblock {An empirical Bayes mixture method for effect size and false
  discovery rate estimation}.
\newblock \emph{The Annals of Applied Statistics}, 4\penalty0 (1):\penalty0
  422--438, Mar. 2010.

\bibitem[Newton et~al.(2004)Newton, Noueiry, Sarkar, and Ahlquist]{Newton2004}
M.~A. Newton, A.~Noueiry, D.~Sarkar, and P.~Ahlquist.
\newblock {Detecting differential gene expression with a semiparametric
  hierarchical mixture method.}
\newblock \emph{Biostatistics}, 5\penalty0 (2):\penalty0 155--76, Apr. 2004.

\bibitem[Stephens(2016)]{Stephens2016}
M.~Stephens.
\newblock False discovery rates: a new deal.
\newblock \emph{Biostatistics}, 18\penalty0 (2):\penalty0 275--294, 2016.

\bibitem[Storey(2002)]{Storey2002}
J.~D. Storey.
\newblock {A direct approach to false discovery rates}.
\newblock \emph{Journal of the Royal Statistical Society: Series B (Statistical
  Methodology)}, 64\penalty0 (3):\penalty0 479--498, Aug. 2002.

\bibitem[Storey(2003)]{Storey2003}
J.~D. Storey.
\newblock {The positive false discovery rate: a Bayesian interpretation and the
  q-value}.
\newblock \emph{The Annals of Statistics}, 31\penalty0 (6):\penalty0
  2013--2035, Dec. 2003.

\bibitem[Storey(2007)]{Storey2007}
J.~D. Storey.
\newblock {The optimal discovery procedure: a new approach to simultaneous
  significance testing}.
\newblock \emph{Journal of the Royal Statistical Society: Series B (Statistical
  Methodology)}, 69\penalty0 (3):\penalty0 347--368, June 2007.

\bibitem[Storey et~al.(2004)Storey, Taylor, and Siegmund]{Storey2004}
J.~D. Storey, J.~E. Taylor, and D.~Siegmund.
\newblock {Strong control, conservative point estimation and simultaneous
  conservative consistency of false discovery rates: a unified approach}.
\newblock \emph{Journal of the Royal Statistical Society: Series B (Statistical
  Methodology)}, 66\penalty0 (1):\penalty0 187--205, Feb. 2004.

\bibitem[Sun and Cai(2007)]{Sun2007}
W.~Sun and T.~T. Cai.
\newblock Oracle and adaptive compound decision rules for false discovery rate
  control.
\newblock \emph{Journal of the American Statistical Association}, 102\penalty0
  (479):\penalty0 901--912, 2007.

\bibitem[Wakefield(2009)]{Wakefield2009}
J.~Wakefield.
\newblock {Bayes factors for genome-wide association studies: comparison with
  P-values.}
\newblock \emph{Genetic Epidemiology}, 33\penalty0 (1):\penalty0 79--86, 2009.

\bibitem[Wen et~al.(2017)Wen, Pique-Regi, and Luca]{Wen2017}
X.~Wen, R.~Pique-Regi, and F.~Luca.
\newblock Integrating molecular qtl data into genome-wide genetic association
  analysis: Probabilistic assessment of enrichment and colocalization.
\newblock \emph{PLoS genetics}, 13\penalty0 (3):\penalty0 e1006646, 2017.

\bibitem[Whittemore(2007)]{Whittemore2007}
A.~S. Whittemore.
\newblock {A Bayesian False Discovery Rate for Multiple Testing}.
\newblock \emph{Journal of Applied Statistics}, 34\penalty0 (1):\penalty0 1--9,
  Jan. 2007.

\end{thebibliography}

\end{document}